\numberwithin{equation}{section}
\begin{document}

\title{Using Simulation to Analyze \\ Interrupted Time Series Designs\thanks{I gratefully acknowledge the productive and stimulating collaboration on this project with MDRC, in particular Chloe Anderson, Brit Henderson, Cindy Redcross, and Erin Valentine. This work is in part generously funded by Arnold Ventures. Further thanks to the useful feedback and commentary from Howard Bloom, Rich Hendra, and the Miratrix CARES Lab. Direct correspondence to \href{mailto:lmiratrix@gse.harvard.edu}{lmiratrix@gse.harvard.edu}}}

\author{Luke Miratrix\\Harvard Graduate School of Education}

\maketitle

\begin{abstract}
We are sometimes forced to use the Interrupted Time Series (ITS) design as an identification strategy for potential policy change, such as when we only have a single treated unit and no comparable controls. For example, with recent county- and state-wide criminal justice reform efforts, where judicial bodies have changed bail setting practices for everyone in their jurisdiction in order to reduce rates of pre-trial detention while maintaining court order and public safety, we have no natural comparison group other than the past.
In these contexts, it is imperative to model pre-policy trends with a light touch, allowing for structures such as autoregressive departures from any pre-existing trend, in order to accurately and realistically assess the statistical uncertainty of our projections (beyond the stringent assumptions necessary for the subsequent causal inferences).
To tackle this problem we provide a methodological approach rooted in commonly understood and used modeling approaches that better captures uncertainty.
We quantify uncertainty with simulation, generating a distribution of plausible counterfactual trajectories to compare to the observed; this approach naturally allows for incorporating seasonality and other time varying covariates, and provides confidence intervals along with point estimates for the potential impacts of policy change. 
We find simulation provides a natural framework to capture and show uncertainty in the ITS designs. It also allows for easy extensions such as nonparametric smoothing in order to handle multiple post-policy time points or more structural models to account for seasonality.
\end{abstract}

\paragraph{Keywords:} Neyman-Rubin causal model, single unit case study analysis, ITS designs, criminal justice reform, pre-trial risk assessment, posterior predictive checks

\section{Introduction}

Interrupted Time Series (ITS) can occur when a governance body of some area, e.g. a county, implements a policy change at a particular point in time.
The researcher is able to observe regular measures of some outcome of interest both for several time points before such a change as well as after.
The research question is then whether there is any evidence that the policy has changed the course of the unit of interest.
For example, if a school undergoes a massive reorganization we would expect that measures of student wellbeing, such as rate of college-going or rate of graduation, might change over time, and change differently than they would have absent the reorganization.
We might not, however, expect any treatment impact right at the time of the policy change as it may take time for the policy to become fully implemented, and for the consequences of the policy to be felt, making  Regression Discontinuity Designs particularly inappropriate.
What makes these contexts particularly challenging is frequently there is only a single unit that received the policy change and/or no reasonable comparison units that did not receive such a change.

One area where we see these kind of reform efforts are in modern criminal justice reform, in particular pretrial reform.
Currently, in the U.S., hundreds of thousands of people are incarcerated in local jails on any given day as they await resolution of their criminal case.
These people have not been convicted, but are nonetheless incarcerated because, generally, they cannot afford to post monetary bail to secure their release \citep{zeng2018jail}.
Several jurisdictions have sought to improve these judicial systems by attempting to build procedures to increase the rate of release for ``low-risk'' defendants.
One general category of such reforms use risk assessment tools in early court proceedings, providing judges with information about the risk of a defendant as measured by various characteristics such as previous criminal history in order to improve judicial decision-making regarding what types of supervision or restrictions should be placed on defendants awaiting their case resolution.

This is the context we use in this work.
We use data from two such reform efforts, one in Mecklenberg County, NC \citep{Redcross:te}, and one in the state of New Jersey \citep{Redcross:nj}.
There are several primary outcomes of interest, of which we examine two: the proportion of arrestees assigned monetary bail, and the total number of warrant arrests made.

Perhaps the most used analytic approach for ITS is to fit a simple linear regression to the data, regressing the outcome of interest onto time and a series of dummy variables for each time point post-policy.
The estimates of these dummy variables then provide impact estimates for each post-policy point.
Unfortunately, even if the underlying linear trend were fundamentally sound, the deviations from trend are likely correlated and this correlation needs to be taken into account.
Not doing so correctly will undermine any estimates of uncertainty by giving overly precise (too small) standard errors.

We propose to account for local dependencies by fitting an autoregressive model with linear trend to the pre-policy data, and then using that model to simulate, using a pseudo-Bayesian approach discussed in \cite{gelman2006data}, a distribution of plausible post-policy trajectories that we would expect if pre-policy trends continued unabated.
By comparing this distribution to the observed post-policy trend, we can estimate impacts and test for the significance of impacts, given the set of rather stringent assumptions necessary for an ITS analysis.
We can also calculate confidence intervals to assess ranges of impact.
This simulation procedure takes into account the uncertainty of the linear model estimate, uncertainty in the measurement of the outcomes, and any autoregressive dependencies in the residuals.
Simulation allows us to easily summarize and visualize heterogeneous impacts post-policy, and in general is an approach that can enrich classic inference \cite{King:vh}.

Simulation also allows for several natural extensions.
First, we can easily incorporate covariates to capture nonlinearities (in particular, seasonal trends).
Second, we can average, or smooth, multiple months of potentially heterogeneous impacts typically found in such evaluations to better capture post-policy impacts in interpretable ways.
This allows testing whether a \emph{group} of post-policy time points differs statistically significantly from what would have occurred in the absence of an intervention.
We provide an R package, \verb|simITS|\footnote{See most recent version at \url{https://github.com/lmiratrix/simITS/}}, that implements the methods discussed along with all routines needed to conduct a full and transparent ITS analysis.

The idea for simulation for assessing uncertainty in these contexts is not new; see, for example, \cite{Zhang:2009de}, who use a parametric bootstrapping approach to assess uncertainty.
Our method is also a parametric simulation approach, but we explicitly include autoregressive dependencies and explicitly simulate post-policy trajectories.
We also discuss the estimands of interest more explicitly.
Similarly, \cite{brodersen_causalimpact} propose a more complex, fully Bayesian time-series approach,  implemented with the \verb|causalImpact| package, that relies on modeling a latent state space.

More broadly, ITS is a generally worse (in terms of strength of evidence) version of \emph{Comparative} Interrupted Time Series (CITS) analyses, where the target treatment series has comparison units that are not treated.
For an overview of CITS, consider \cite{Somers:2013vg} or \cite{Hallberg:2018kp}.
Also see \cite{Jacob:2016jv}, who evaluate the CITS by comparing its findings to those utilizing the more widely-accepted Regression Discontinuity Design.
For a detailed case study with CITS in the context of experimental trials, see \cite{Bloom:2005vc}; this approach has ties to ITS as they fit regressions to the sequence of paired differences.

ITS is also, of course, based on the idea of a time series.
Classic time series methodology, e.g., ARIMA models, could account for linear trend by differencing the observations and then modeling the resulting differences as, ideally, a stationary time series.
As this approach gets further away from the classic linear modeling approaches more familiar with policy evaluators, we instead follow the linear modeling approaches found in the ITS and CITS literature.
For this alternative direction, however, see, e.g., \citet{stoffer2006time}

In this paper we first lay out the ITS problem and its classic treatment.
We then, in Section~\ref{sec:simulation}, describe the simulation procedure that allows for a simple autoregressive structure, illustrating with an example taken from the Mecklenberg County evaluation.
We then provide our two primary extensions mentioned above---seasonality and smoothing---in Sections~\ref{sec:seasonality} and \ref{sec:smoothing}.
We offer some general cautions and concluding remarks at the end.
Our supplementary offers some further commentary and some extensions.
In particular, we give further justification of the modeling choices we suggest, discuss how to adjust for time varying covariates, and finally gives a brief overview of the accompanying publicly available R package that implements everything discussed.
We do not focus on the causal inference reasoning behind ITS: our work focuses on assessing whether there was change; the question as to why requires additional work, and for that we refer the reader to sources such as \cite{cook2002experimental}.
Causal interpretation of an ITS finding can be fragile in some contexts; see, e.g., \cite{baicker2019testing}.

\section{Notation and Setup}
We have a single treated unit. 
We observe this unit at several time points before treatment (e.g. a policy change) as well as for several time points after.
For example, consider Figure~\ref{fig:data}, showing two time series: the proportion of all arrests in Mecklenberg for each month for a period before and after a reform effort \citep{Redcross:te}, and the total number of warrant arrests each month before and after a major reform effort in New Jersey \citep{Golub:I9dpB0lF}.

Based on the trend of the unit before the policy change, we will extrapolate to determine what we would see post policy had business continued as usual.
For example, if we have observed a steady but slow increase in our outcome, we would project that steady but slow increase into the post policy period.
If what we actually observe deviates from that projected trend, we know that something has changed our system to cause this departure.
The core assumption behind an Interrupted Time Series design is stability; everything rests on the assumption that, absent any impact, our unit would evolve as it has been.

\begin{figure}
\centering
\begin{subfigure}[t]{.48\textwidth}
  \includegraphics[width=\textwidth]{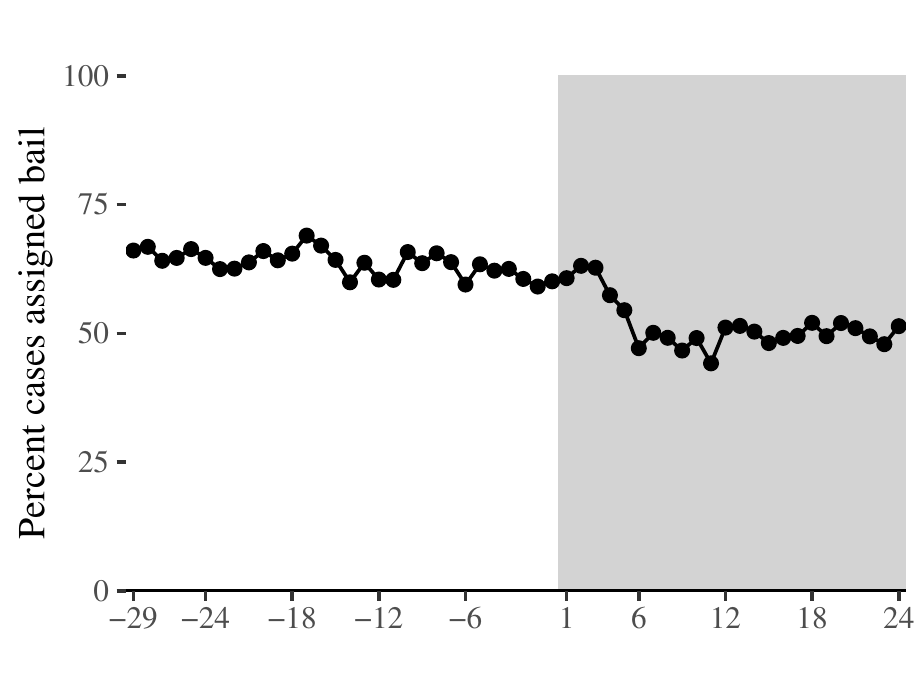}
  \caption{Monthly proportion of all arrests that assigned bail (or detention) in Mecklenberg.}
  \label{fig:meckdata}
\end{subfigure}
\begin{subfigure}[t]{.48\textwidth}
  \includegraphics[width=\textwidth]{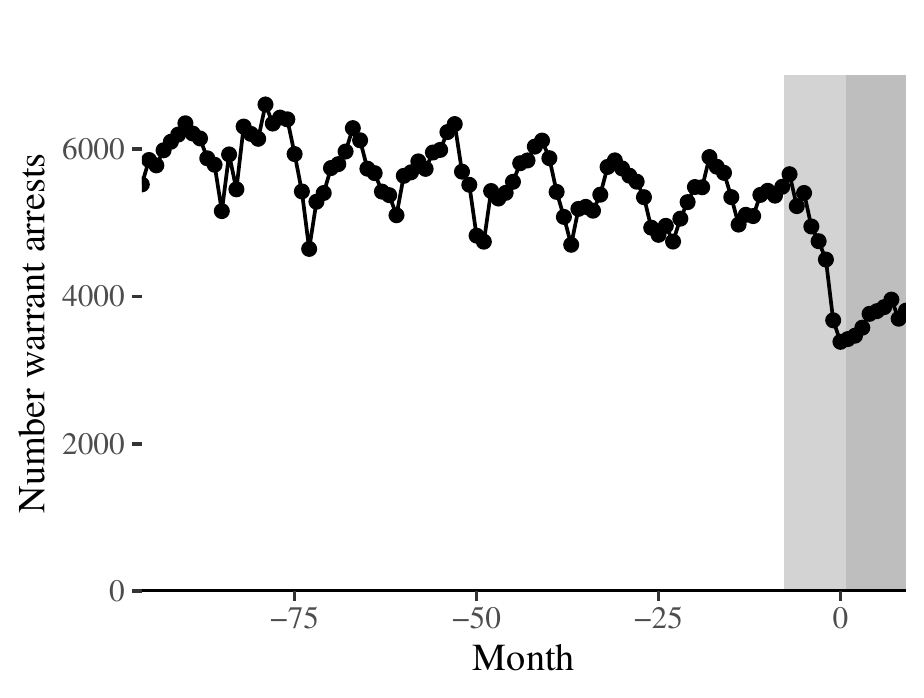}
  \caption{Monthly number of warrant arrests made in New Jersey.}
  \label{fig:njdata}
\end{subfigure}
\caption{Two example Interrupted Time Series.}{The dark grey indicates the post-policy era. $t_0 = 0$ in these figures, with pre-policy time being non-positive. Right side shows evident seasonality.  Left side suggests some auto-correlation which may be due to seasonality or other unknown factors.}
\label{fig:data}
\end{figure}

We borrow from the potential outcomes viewpoint (for overview, see \citet{CausalInferenceText} or \citet{Rosenbaum:2012ul}) to make the above more precise.
We have a single unit, and we can either treat it (invoke policy change) at time $t_0$, or not treat it at all.
Let $Y_t(0)$, $t = t_{min}, \ldots, t_{max}$, be the sequence of outcomes we would observe if we did not ever treat our unit.\footnote{Note the subscript here does not denote the unit, as is typically seen, but rather the time of observation for our single unit.}
Let the corresponding $Y_t(1)$ be the outcomes we would observe if we did treat the unit at $t_0$.
We could allow $Y_t(1) \neq Y_t(0)$ for $t \leq t_0$ if we allowed anticipatory effects of treatment, 	i.e., if the unit knows it will be treated it may change before the time of treatment.
In this work, we make the further assumption, however, that there is \emph{no anticipation of treatment}, i.e., that $Y_t(1) = Y_t(0)$ for all $t \leq t_0$.
In some cases, to achieve this assumption, one can move the point of treatment onset earlier, e.g., to when a policy was initially being planned rather than its official adoption date.

The impact of policy at a specified time $t$ is then $\Delta_t \equiv Y_t(1) - Y_t(0)$.
Our observed data consist of a single treated unit, so the $Y_t(1)$ are observed for all $t > t_0$.
If we had the ability to estimate $Y_t(0)$ we could immediately estimate $\Delta_t$.
This converts our estimation problem to a missing data problem \citep{Rubin:2005hf}.
Within this framework, uncertainty around the difference is entirely dependent on uncertainty in our estimation of $Y_t(0)$.

ITS analysis estimates the $Y_t(0)$ by fitting a trend (i.e., model) to the pre-policy data and extrapolating to post-policy timepoints.
We next discuss how this estimation is typically conducted, and identify some problems with it.
We then offer an augmented modeling approach with corresponding inference procedures.

\paragraph{Remark.}
We analyze at the group level by aggregating individual data within each month.
We might imagine instead analyzing at the individual level, but this will bring in further complexity from, e.g., individuals being in multiple months (e.g., from multiple arrests in our context), and unknown correlation structure of individuals within a given month; aggregation avoids this.
Furthermore, migration of individuals into and out of the policy region could further exacerbate the difficulties with individual trend approaches.
The aggregation avoids these problems by focusing on the ``health'' of the policy unit rather than the impact on individuals.
Results are then regarding changes at the larger unit level, which can impact interpretation.
That being said, without strong individual level predictors, aggregation will surprisingly not have a high cost in power; the variation in the month-to-month averages is a reflection of individual variation (as well as shared month shocks) and so the fewer data points is coupled with less residual noise for those points.
See \citet{angrist2008mostly}, Chapter 3, for a more detailed discussion of this general rule.
For further discussion on aggregation see \cite{Bloom:2005vc}, Appendix D.
For some dangers with aggregation if the number of units being aggregated changes significantly, see \citet{Ferman2015}.

\subsection{Classic ITS analysis}

In a classic ITS analysis one would fit the simple linear regression model of
\begin{equation}
Y_t = \beta_0 + \beta_1 t + \sum_{k=t_0+1}^{t_{max}} \Delta_k \ind{t=k} + \epsilon_t \label{eq:classicITS}
\end{equation}
with $\epsilon_t \stackrel{\text{iid}}{\sim} N(0,\sigma^2)$ and the $\ind{t=k}$ 0/1 indicators of whether $t=k$ for each post-policy time point $k$.
This model will perfectly fit all post-policy months, meaning the estimates of $\beta_0$ and $\beta_1$ will only depend on pre-policy months.
The $\widehat{\Delta}_k$ are then the specific impact estimates for each month $k$, capturing the departure of $Y_t$ from the projected $\hat{\beta}_0 + \hat{\beta}_{1}t$.
Under a homoskedasticity assumption, we can obtain standard errors and conduct inference for the estimated $\Delta_k$, because we assume the variation post-policy is the same as pre-policy.
These standard errors will be driven by, and be no smaller than, $\hat{\sigma}$, the estimated residual standard deviation (see Appendix A for derivation).

Nearly equivalent to the above, one can simply fit the model to the pre-policy data only, dropping the post-policy dummy variables:
\begin{equation}
Y_t(0) = \beta_0 + \beta_1 t + \epsilon_t . \label{eq:control_equation}	
\end{equation} 
We then, for any point $t > t_0$ in the post-policy era, predict via extrapolation,
\begin{equation*}
 \hat{Y}_{t}(0) = \hat{\beta_0} + \hat{\beta_1} t ,
\end{equation*}
which results in an impact estimate at month $t$ of
\[ \widehat{\Delta}_{t} = Y_{t}^{obs} - \hat{Y}_{t}(0) .\]
These point estimates will be identical to the $\widehat{\Delta}_t$ from Model~\ref{eq:classicITS}.
However, Model~\ref{eq:control_equation} makes the connection to the potential outcomes framework most clear: our model predicts, via extrapolation, $Y_t(0)$ for all $t > t_0$.
We fit our model to pre-policy data, data unaffected by the policy (by assumption), and then use our fitted model to impute (predict) the missing $Y_t(0)$ for $t > t_0$.\footnote{By contrast, instead of not using post policy data at all in the fitting process, some will instead put a structure on the post-policy impact as well, such as with
\[ Y_t = \beta_0 + \beta_1 t + \delta_0 \ind{t > t_0} + \delta_1 \ind{ t > t_0 } (t - t_0 - 1) + \epsilon_t , \]
with $\ind{ t > t_0 }$ being a 0/1 indicator of $t$ being after $t_0$, the end of the pre-policy era.
Now the parameters $\delta_0$ and $\delta_1$ form a model of effects for the impact (in this case the impact begins at size $\delta_0$ and grows by $\delta_1$ each month, and $\Delta_t = \delta_0 + \delta_1(t-t_0 - 1)$ for $t > t_0$.  This allows the post-policy data to inform the estimated residual variance.}

These models produce valid inference under the modeling assumptions, in particular the strong assumption of the linear trend continuing into the post-policy period.  
As a model check, the linear trend can be assessed in the pre-policy period; if there are strong deviations  pre-policy, then extrapolation should be done with skepticism.
The \emph{causal} interpretation, however, relies on any found deviation being only explainable by the policy change; it is a substantive question whether there were other factors or changes that happened concurrently or after the policy reform, producing changes in outcomes that should not be ascribed to the policy.

One concern with these approaches is that there may be effects that operate in windows of time causing adjacent months to have similar outcomes beyond the underlying model.
For example, the pattern of month-to-month averages in Mecklenberg (see Figure~\ref{fig:meckdata}) could contain local correlations of months around what is a generally linear trend (we discuss the case of cyclic seasonal trends such as shown in Figure~\ref{fig:njdata} in Section~\ref{sec:seasonality} below).

If we do not model temporal dependence, we are assuming that, other than the underlying linear trend, there is no dependence between months beyond the explicit model.
For example, if month $t$ were surprisingly high, this would not imply any other month, such as month $t+1$, would have any particular value.
To produce more principled inference we therefore extend Model~\ref{eq:control_equation} to allow for neighboring residuals to be correlated.
This better captures how the time series can ``wander'' from the linear trend.

A simple approach is to model local dependence using an ``AR1'' model that uses the residual in the prior time period as a predictor of the residual of the next.
For example, we can specify the residual of Model~\ref{eq:control_equation} to be
\begin{equation}
 \epsilon_t = \rho \epsilon_{t-1} + \omega_t \mbox{ with } \omega_t \stackrel{\text{iid}}{\sim} N( 0, \sigma^2 ) .\label{eq:residual}	
\end{equation}
The parameter $\rho$ governs how much autocorrelation we have.
If $\rho = 0$ the residuals are in fact independent.
Higher values of $\rho$ means deviations from trend tend to be similar, month-to-month.
A $\rho > 1$ would mean a successive observation would be some percent larger than the last, in expectation, and thus the series would exponentially move away from the trend line; we therefore require $\rho < 1$.

An easy way of fitting such a model is to fit the lagged \emph{outcome} model of
\begin{equation}
Y_t = \tilde{\beta}_0 + \tilde{\beta}_1 t + \tilde{\beta}_2 Y_{t-1} + \tilde{\epsilon}_t  \mbox{ with } \tilde{\epsilon}_t \stackrel{\text{iid}}{\sim} N(0,\tilde{\sigma}^2) \label{eq:ITS_lagged_outcome}
\end{equation}
to the pre-policy time points $t = 2, \ldots, t_0$.
The initial month has to be dropped as it has no lagged month.
Up to how the parameters are interpreted, this model is equivalent to the lagged residual model. 
In particular, as the derivations in Appendix A show, we have $\rho = \tilde{\beta}_2$, $\beta_1 = \tilde{\beta}_1 / (1-\tilde{\beta}_2)$ and $\beta_0 = \tilde{\beta}_0 / (1-\rho) - \tilde{\beta}_1 \rho / (1-\rho)^2$.
The residuals in the lagged outcome model are, under our residual autoregressive model, again independent, corresponding to the $\omega_t$ from Model~\ref{eq:residual}.
See Appendix A for additional discussion.

Once this model is fit, we use it to extrapolate a reasonable counterfactual prediction of $Y_T(0)$ for any timepoint $T > t_0$ of interest.
In the next section, we discuss how to do this with simulation.

\section{Extrapolating pre-policy trends via simulation} \label{sec:simulation}

Impacts are estimated by extrapolating the pre-policy model to a post-policy timepoint, $T > t_0$, of interest.
It not obvious how to use the model to form counterfactual predictions when using autoregressive structure.
In particular, for $T > t_0 + 1$, if the treatment has impacted point $T-1$, we cannot use the observed $Y_{T-1}$ as our lagged covariate for our prediction because $Y_{T-1}$ is not an observed $Y_t(0)$, but rather a $Y_t(1)$; any treatment impact in our lagged covariate will contaminate our imputation of $Y_T(0)$.
Second, assessing uncertainty for a point $T$ dependent on prior points is, mathematically, not entirely transparent.
We therefore assess uncertainty and form predictions via simulation.\footnote{One could instead use maximum likelihood and asymptotic approximations given the defined residual structure; we argue the parametric simulation approach we use provides a flexible and easily extendible alternative.}
In the next subsection, we first consider the case where we are willing to assume the lagged model is correct and we knew with certainty the parameters $\tilde{\beta}_0, \tilde{\beta}_1, \tilde{\beta}_2$, and $\tilde{\sigma}^2$ of our lagged model.
This case is not quite valid since we do not know these parameter values and so our uncertainty is not fully captured; we include it for clarity of exposition.
We then, in the following subsection, extend to our actual proposed method that incorporates the additional uncertainty of these parameters.

\subsection{Extrapolating with known parameters}
We initially assume the model of Equation~\ref{eq:ITS_lagged_outcome} and that our parameters $\theta = (\tilde{\beta}_0, \tilde{\beta}_1, \tilde{\beta}_2, \tilde{\sigma}^2)$ of the pre-policy model are known.
We also have observed $Y_{t_0}$, the last point in the pre-policy era.

Using this, we can simulate $Y_{t_0+1}$ by drawing a new $\epsilon_{t_0+1}^* \sim N(0,\tilde{\sigma}^2)$ and calculating
\[ Y^*_{t_0+1} = \tilde{\beta}_0 + \tilde{\beta}_1 (t_0+1) + \tilde{\beta}_2 Y_{t_0} + \epsilon_{t_0+1}^* .\]
This simulated outcome is a plausible post-policy outcome, given our model.  
We can then simulate an outcome for $t_0+2$ using $Y^*_{t_0+1}$, drawing a new $\epsilon_{t_0+2}^*$ and adding up the components just as for $t_0+1$.
Our second simulated outcome depends on our first.  If our first is elevated due to a positive residual, our second will also be elevated.  
We then simulated our third, using the second, and continue in this manner until we reach $T$, and are left with a prediction for $Y_{T}$.
By this point we have generated an entire sequence of plausible outcomes, given our model.
Furthermore, this simulation process has fully captured the autoregressive structure.

Our final prediction $Y_{T}$ is a noisy prediction: it could be high or low depending on the residual draws.
This noise is the key to capturing uncertainty.
Both to get a more precise prediction and also to model the prediction uncertainty, we repeat the simulation process many times, for each iteration beginning at $t_0$ and $Y_{t_0}$ and simulating a new time series. 
We then calculate the average of these series to get our final prediction:
\[ \widehat{Y}_{T}(0) = \frac{1}{R} \sum_{r = 1}^R Y^{*(r)}_{T},  \]
where $R$ is the total number of simulated series and $r$ indexes these simulated series.

For inference, the middle 95\% of our simulated $Y_{T}^{*(r)}$ forms a 95\% prediction interval of what we would expect to see, had the pre-policy trend continued.
If what we actually see, $Y_{T}^{obs} = Y_T(1)$, lies outside of this interval, we have evidence our model does not extrapolate to time $T$, suggesting that something happened to change our model.
This would be evidence of an impact of either the policy change or some other event within the system.

We can subtract the prediction interval from the observed $Y_{T}$ to obtain a prediction interval for the deviation from the predicted trend (this is the quantity that could potentially be viewed as an impact).
This prediction interval correctly captures the month-to-month variability of the observed trend; see Appendix A.

The major caveat to this process is we do not know the true $\theta$; we instead have an estimate $\widehat{\theta}$. 
If we simply plug in $\widehat{\theta}$ our inference will be overly optimistic as we have not taken uncertainty in the estimation of the parameters themselves into account; we do that next.

\subsection{Incorporating uncertainty in the parameters}
To capture parameter uncertainty we use a method rooted in Bayesian thinking and taken from \cite{gelman2006data}.
It also has ties to the parametric bootstrap \citep[see, e.g.][]{davison1997bootstrap}.
The idea is this: instead of using $\hat{\theta}$, draw a random vector of parameters $\theta^*$ for our model given our observed pre-policy data.
This randomly drawn vector of parameters is itself a plausibly true value, just as we were drawing plausibly true values for the $Y_t$, above.
We then simulate a sequence of $Y_t^*$ using the simulation process described above but with the $\theta^*$ (and still starting at $Y_{t_0}$) to get a plausibly true prediction conditional on the parameters.
This two-step process captures the uncertainty in model estimation as well as uncertainty in extrapolation due to the autoregressive structure and residual error.
The distribution of the $Y_T^*$ over repeated iterations gives an overall predictive distribution that is integrated over both these components.

To get our distribution of plausible $\theta^*$, we use the (estimated) standard errors from the original model fitting process.
In particular, we draw a random $\beta^* = (\beta_0^*, \beta_1^*, \beta_2^*)$ vector from a multivariate normal centered at $\hat{\beta} = (\hat{\beta}_0, \hat{\beta}_1, \hat{\beta}_2)$ with a variance-covariance matrix based on the estimated variance-covariance matrix from the linear model fitting procedure (the $\sigma^{2*}$ term is handled similarly).
This is implemented using the \verb|sim()| function in the R package \verb|arm|.
The \verb|arm| package was written specifically for this form of uncertainty quantification, and is the companion package to \cite{gelman2006data}.

This approach is essentially Bayesian: the parameter draw step is similar to drawing a plausible value from a posterior distribution on the true $\theta$ (the implied prior here is implicitly a flat prior on the coefficients, roughly meaning that we are not differently preferring one value of $\theta$ over another).  
Under this view, the simulations constitute a posterior predictive distribution for $Y_{T}$ and the $\hat{Y}_{T}$ is the posterior mean predicted outcome given all the pre-policy data and the model (see \citet{gelman1996posterior} for a discussion of posterior predictive distributions).
Further, under this view, the final prediction interval can be interpreted as a posterior predictive interval for $Y_{T}(0)$.
Imputing missing potential outcomes in this way follows the approaches discussed in, e.g., \cite{Rubin:2005hf}.
Regardless, the core feature of this approach is that we end up with a range of plausible values for $Y_{T}(0)$ that incorporate the natural variation in the data as well as uncertainty about the parameters of our model.

The validity of the range of plausible values depends on the model being correctly specified.
We believe this approach to uncertainty quantification renders model dependency more transparent (salient) than a classic maximum likelihood analysis or regression approaches.
For example, we here see more explicitly the importance of the correct specification of the initial linear trend and the homoskedasticity assumption.
We are not making more or different assumptions than the classic approaches with autoregressive specifications, but rather are making the identical assumptions more explicit.
We do avoid some of the asymptotic approximations used in maximum likelihood inference.

\subsection{Case Study: Mecklenberg County and the proportion of cases assigned bail}

Mecklenberg instituted a series of reforms including changing their pre-trial risk assessment tool to a tool called the Public Safety Assessment (the PSA).
These reforms were designed to reduce the negative impacts on arrestees while maintaining public safety; the goal is to identify and release those defendants unlikely to not appear at future court hearings or break further laws while awaiting trial, while imposing monitoring on the remainder.
One outcome of interest in evaluating the effectiveness of this program is the rate of bail setting (what proportion of cases resulted in the assignment of bail or outright detention) as compared to outright release.
See \citet{Redcross:te} for further discussion.

To investigate this we fit Equation~\ref{eq:ITS_lagged_outcome} to the Mecklenberg data displayed on Figure~\ref{fig:meckdata}.
Our estimated coefficients are $\widehat{\beta}_0 = 45$, $\widehat{\beta}_1 = -0.12$, and $\widehat{\beta}_2 = 0.26$.
The lagged outcome term ($\hat{\beta}_2$) is not significantly different from 0.
We see that the pre-policy trend does appear roughly linear.
The lack of significance of our autoregression term suggests that there is little autocorrelation after the linear trend is accounted for, but keeping it in our simulation incorporates the additional uncertainty that even a small amount could bring.
Dropping the lagged term from our model would be imposing the assumption of independence, which, given substantive knowledge of seasonality effects on criminal and policing behavior, is not tenable.
The failure to find a significant correlation could be a power issue.

\begin{figure}
\centering
\begin{subfigure}[t]{.48\textwidth}
  \includegraphics[width=\textwidth]{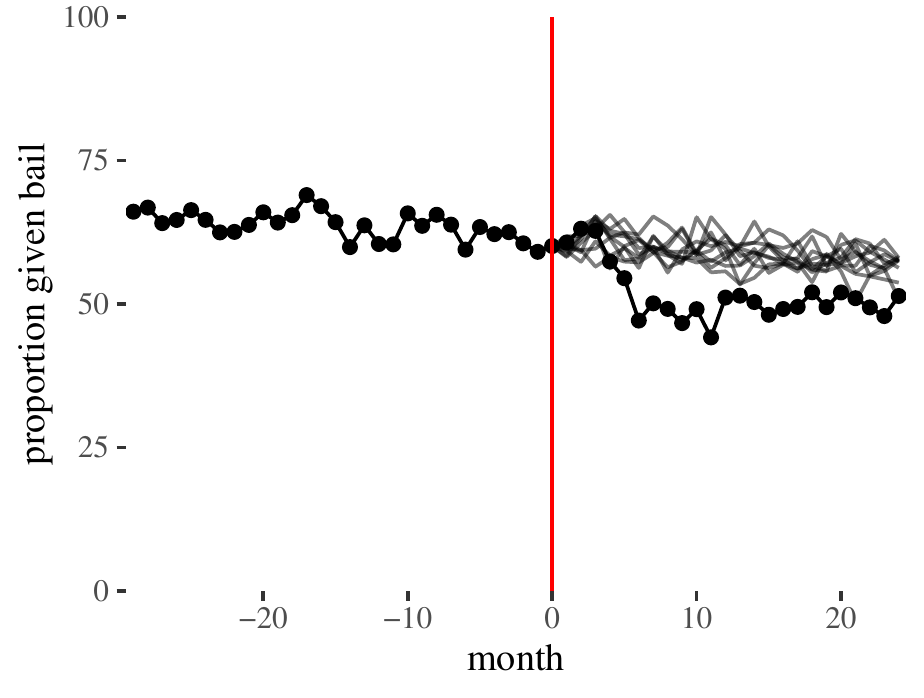}
  \caption{Ten extrapolated series.}
  \label{fig:meck10traj}
\end{subfigure}
\begin{subfigure}[t]{.48\textwidth}
  \includegraphics[width=\textwidth]{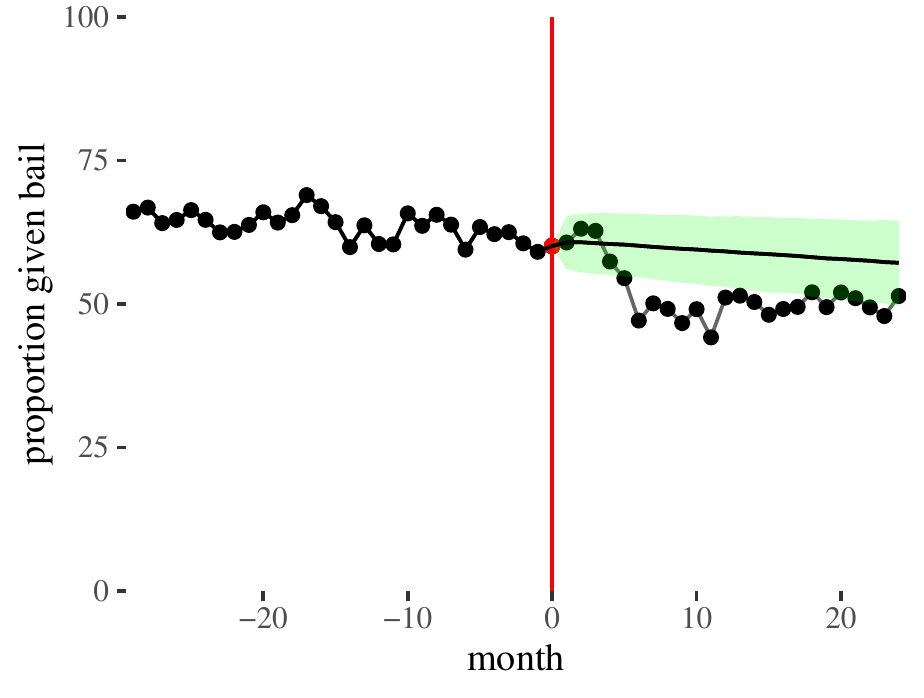}
  \caption{Envelope from 10,000 simulations.}
  \label{fig:meckenvelope}
\end{subfigure}
\caption{Results of Mecklenberg analysis.}{At left ten sample simulated series along with observed data. At right the overall envelope of plausible series given pre-policy data. We see that for many post-policy months the proportion of cases assigned bail is  not in the range of likely bail rates, suggesting that there was a more rapid decline of bail-setting after the policy change than expected given the slow decline of the pre-policy trend.}
\label{fig:meckresults}
\end{figure}

Using our model we can generate trajectories starting at $t_0 = 0, Y_0 = 60.1$.
Ten such extrapolations are on Figure~\ref{fig:meck10traj}.
We generate 10,000 such extrapolations based on 10,000 draws of possible parameters $\theta$, and summarize by, for each time point, taking the middle 95\% range of values.
We plot these as an envelope on Figure~\ref{fig:meckenvelope}.

Overall we see evidence of a reduction of the use of bail.
Pre-policy trends do not tend to fall as far as what actually occurred.
We also see that the observed outcomes for the first four months after the policy change are still potentially following the pre-policy trend; the departure is only really significant at month 5 and 6.
At this point, actual bail mostly levels off at the reduced rate of around 50\%. 
Patterns such as these raise important issues of how to ascribe the change: was this drop at month 5 due to the policy shift, or due to some subsequent intervention that may or may not have been part of the policy?
In this case, there is some qualitative evidence that Mecklenberg continued to reinforce their policy change with additional trainings of court agents, which could have caused this delayed impact.

The nominal impact is the difference of the \emph{projected} trend and the actual, which means the change in the overall level of an outcome does not necessarily mean there is a measured impact.
In this case, for example, we see the overall linear pre-policy trend projecting a steadily decline of bail assignment.
This means that at around 2 years post policy we cannot rule out an absence of impact: those bail levels may have been reached regardless, considering the pre-policy declining trend, but at a later time than with the policy change.

But then again, the further out an extrapolation the greater our dependance on the model being correctly specified, both statistically and as a representation of a dynamic and complex system.
The statistical model can extrapolate assuming the general model fit to pre-policy, but the assumption that these trends would continue indefinitely becomes substantively less plausible the further away from the transition we go. 
The greater uncertainty in later months is only due to estimation error, and is dependent on the assumption that the pre-policy process would have continued unabated in the absence of the policy change.
In particular, we cannot know if alternate measures would have been taken had the policy not been imposed or if the system would have naturally reached some change point given the dynamics.

Overall, there are three sources of uncertainty to attend to in such analyses, with only the first two  quantifiable: (1) parameter estimation error for the model, (2) the natural variation due to month-to-month changes and associated auto-regression, and (3) model specification.

\section{Seasonality effects}
\label{sec:seasonality}

In New Jersey, when a person is arrested the arresting officer can (1) serve a summons, where the officer gives the arrestee a court date for a future appearance and then sends them home, or (2) serve a warrant, which could result in detention until the resolution of the case.
One consequence of a policy rooted in risk assessment might be to change policing behavior towards only giving warrants for the more serious offenses.
An outcome of interest that assesses this is the total number of warrant arrests made.

Counts can be more difficult to model than proportions.
Figure~\ref{fig:njdata} shows a strong periodic trend across the years, with reduced number of arrests when it is winter, and more in summer.
In fact, average temperature in a month (a good proxy for season) is found to predict total arrests quite strongly; see Figure~\ref{fig:temperature}.
These seasonal cycles are likely due to factors such as increased time spent indoors and away from the public eye during the colder winter months.

\begin{figure}
\centering
\begin{subfigure}[t]{.48\textwidth}
  \centering
  \includegraphics[width=\textwidth]{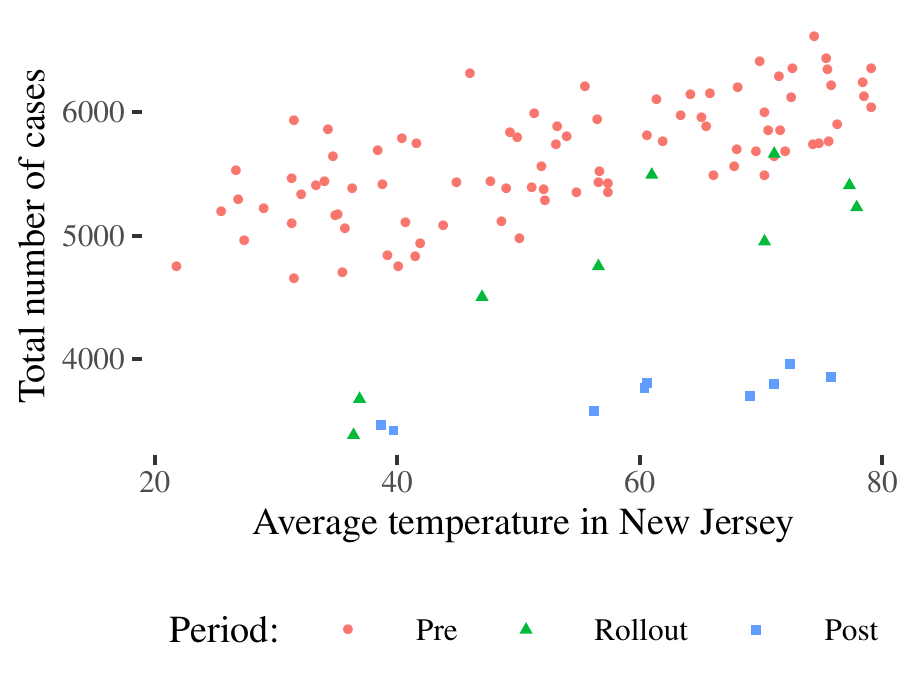}
  \caption{Relationship of average temperature to number of warrant arrests.}
  \label{fig:temperature}
\end{subfigure}
\begin{subfigure}[t]{.48\textwidth}
  \centering
  \includegraphics[width=\textwidth]{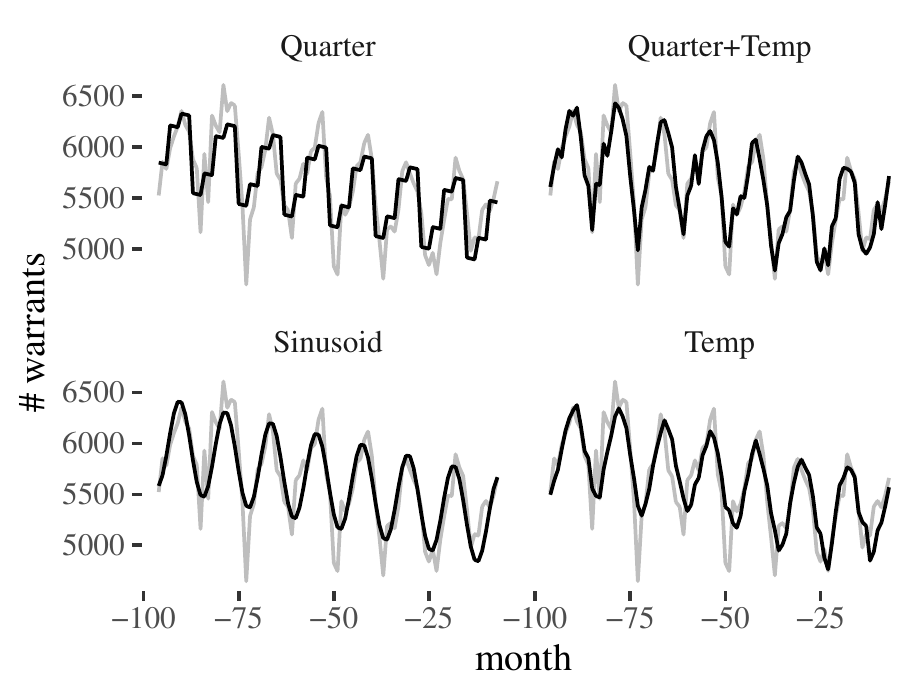}
  \caption{Four seasonality models for number of warrant arrests}
  \label{fig:all_season_models}
\end{subfigure}
\caption{Seasonality in the total arrests data in New Jersey.}
\label{fig:seasonality_plots}
\end{figure}

Fitting a simple autocorrelation model would miss the cyclic nature of our trend, which means we have clear model misspecification and which, in this case, results in substantial loss of power (as we show below).
We instead extend our linear model to model the periodic trend.
The autoregressive element would then allow local departures from the overall seasonality model, just as we had local departures from the linear model above.

There are several ways one might capture a periodic seasonality structure with linear regression.
A simple approach is to include dummy variables for the four seasons.
The following model, for example, has the first quarter as a baseline, has three offsets for the other three quarters, and also allows an overall linear trend:
\[ Y_t = \beta_0 + \beta_1 t +  \gamma_2 Q_{2t} + \gamma_3 Q_{3t} + \gamma_4 Q_{4t} + \epsilon_t , \]
with $Q_{2t}, Q_{3t},$ and $Q_{4t}$ $0/1$ indicators for being in the 2nd, 3rd, and 4th quarters of the year.
A second approach is to use a covariate that is predictive of outcome and is itself periodic, such as, in our case, monthly average temperature in the region:
\begin{equation*}
Y_t = \beta_0 + \beta_1 t + \beta_2 Temp_t + \epsilon_t, 
\end{equation*} 
where $Temp_t$ is a measure of average temperature for month $t$.
The periodic nature of our data is then driven by the periodic nature of our time-varying covariate.
These general approaches can easily be combined:
\begin{equation}
	Y_t = \beta_0 + \beta_1 t + \beta_2 Temp_t + \beta_3 Q_{2t} + \beta_4 Q_{3t} + \beta_5 Q_{4t} + \epsilon_t .	\label{eq:temp_model} 
\end{equation} 
One potential concern with seasonal dummy variables is the resulting curve will be a step function rather than a smooth curve, with steps at pre-specified points that are not data driven.
We could alternatively fit a sinusoidal trend by building two covariates that correspond to the sin and cos of the month (rescaled to have a yearly period).
Linear combinations of these two covariates allow for sinusoidal curves that can be smoothly shifted left or right.
For example:
\[ Y_t = \beta_0 + \beta_1 t + \rho_1 sin( 2 \pi t / 12 ) + \rho_2 cos( 2 \pi t / 12 ) + \epsilon_t . \]
Different coefficient values for $\rho_1$ and $\rho_2$ control where the peaks and valleys of this trend are.

To illustrate these four fitting approaches, see Figure~\ref{fig:all_season_models}, which shows simple fits (without lagged variables) to the pre-policy data.
Of the four models, the model with both quarter and temperature has the best pre-policy fit, with an estimated residual standard deviation of 192 compared to 250 and above for the other models.

\subsection{Seasonality with autoregressive residuals}


Once a seasonality model is selected, we again are faced with how to fit the autoregressive residual structure in a simple way that also lends itself to simulation.
We cannot simply include the lagged outcome, as this lagged outcome includes the lagged periodic structure.
We therefore include the lagged values of the covariates used to model seasonality along with the outcome; this subtracts out the lagged structural component of the trend, resulting in a corrected model that puts the autoregression solely on the residuals.
See Appendix B for a derivation of this result, along with some alternative estimation strategies.

For example, for Model~\ref{eq:temp_model} we would have $X_t = (1, t, Q_{t2}, Q_{t3}, Q_{t4}, Temp_t)$.
We would then fit the following regression:
\begin{align*}
 Y_t &= X_t'\beta - X_{t-1}' \beta_\ell + \rho Y_{t-1} + \omega_t ,
\end{align*}
with $\beta$ our primary trend and $\beta_\ell$ our lagged ``anti-trend'' (generally $\beta \approx \beta_\ell$, with an exact equality if we fully believe our lagged model).
There is a small technical caveat: the lagged covariates can frequently be collinear with the contemporaneous covariates, thus producing an overall design matrix that is not full rank.
For example, if we include a linear time component by including the covariate $X_{t,2} = t$ as one of the columns of our design matrix, the design matrix with our lagged covariate of $X_{t,k} = t - 1$ will clearly be fully collinear with $X_{t,2}$.\footnote{This colinearity is why the simple lagged linear trend model does not have an extra term beyond the lagged outcome itself.}
This can also happen with periodic covariates such as $X_{t,k} = \sin( a t )$.
This colinearity is easily resolved: simply drop collinear columns (in particular the intercept and time variables), allowing the remaining parameters to estimate the combined influence of both the primary observation and the structural component of the lagged outcome.

\subsection{Case study: New Jersey and the number of warrant arrests}

We next analyze the data on warrant arrests shown on Figure~\ref{fig:njdata} with our seasonality model.
We fit Model~\ref{eq:temp_model} with the autoregressive residual model of $\epsilon_t = \rho \epsilon_{t-1} + \omega_t$.
We set $t_0 = -8$ due to evidence that there was some preparatory restructuring and changes made in advance of the policy's official start date to ensure a smooth launch; by setting $t_0=-8$ we increase the plausibility of our no anticipation assumption.
We, following the above, extend our model to include the lagged outcome and lagged covariates, giving
\begin{align*}
  Y_t &= \beta_0 + \beta_1 t + \beta_2 Temp_t + \beta_3 Q_{2,t} + \beta_4 Q_{3,t} + \beta_5 Q_{4,t} + \\
  &\qquad \beta_6 Temp_{t-1} + \beta_7 Q_{2,t-1} + \beta_8 Q_{3,t-1} + \beta_9 Q_{4,t-1} + \rho Y_{t-1} + \omega_t	.
\end{align*}
We then generate the predictive envelope on Figure~\ref{fig:nj_results_unsmooth} by following the process described above.

\begin{figure}
\centering
\begin{subfigure}[t]{.48\textwidth}
  \includegraphics[width=\textwidth]{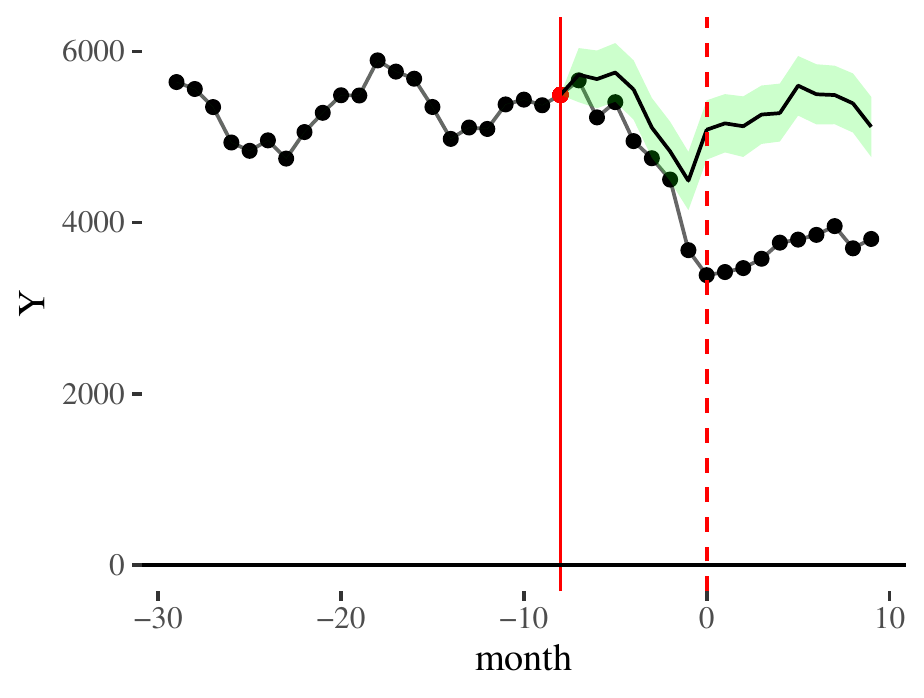}
  \caption{Projection with seasonality.}
  \label{fig:nj_results_unsmooth}
\end{subfigure}
\begin{subfigure}[t]{.48\textwidth}
  \includegraphics[width=\textwidth]{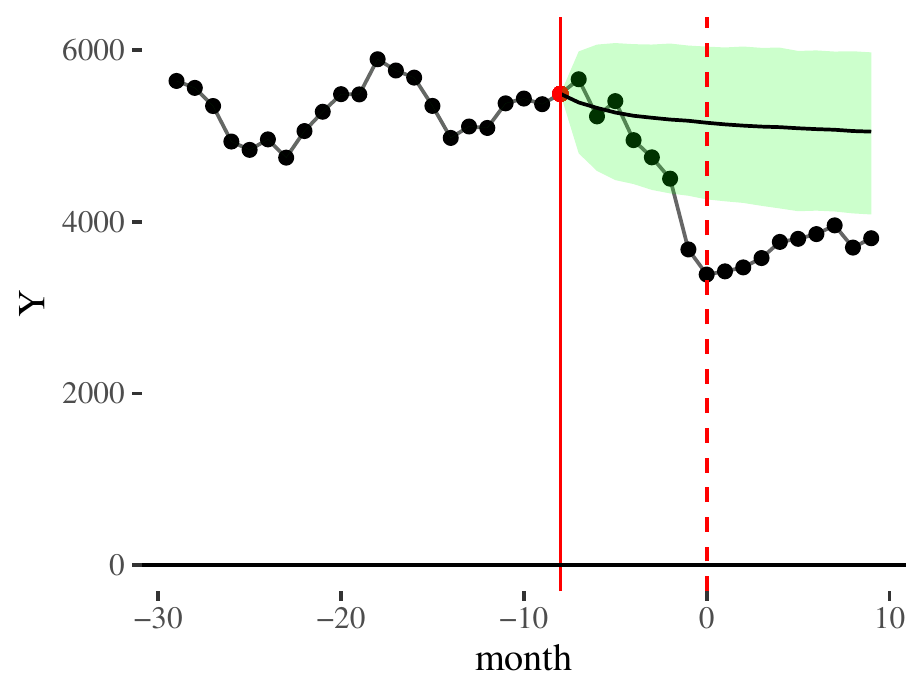}
  \caption{Projection without seasonality}
  \label{fig:nj_no_season}
\end{subfigure}
\caption{Prediction envelopes for number of warrant arrests in New Jersey.}{Time period ($x$-axis) truncated to show more detail of model fitting in post-policy era. (a) shows seasonality model, (b) shows model with no seasonality. See raw data on Figure~\ref{fig:njdata}.}
\label{fig:nj_results}
\end{figure}

By comparison, if we had not included a seasonality model and instead simply fit our simple linear trend model, we get Figure~\ref{fig:nj_no_season}.
The model without seasonality has more autocorrelation (estimated as 0.77 vs. 0.69), because points near each other are correlated due to the periodic trend around the base linear model, while the seasonality model captures and removes these dependencies.
This autocorrelation allows for large deviations from trend in the simulated extrapolated series, and thus we see a large confidence envelope.
In general, without the seasonality model we are not able to take advantage of the seasonal structure of the data, but the autoregressive element does capture that there is local dependence, resulting in a conservative inference.

One might ask whether Mecklenberg should also be fit with a seasonality component.
Generally, to reliably estimate a seasonality structure, we would need several cycles of the seasons; Mecklenberg is ``too short'' to ascertain that structure.
In this case, we rely on the estimated autocorrelation to capture the overall uncertainty.
Determining when to fit the more complex model vs. not is an important area for future work, but we found that with eight years of data, and a clear seasonal trend, the seasonality model was easily estimable.
For Mecklenberg, however, seasonality models were quite unstable.

\section{Inference and smoothing} \label{sec:smoothing}

Reading the envelope graphs from the above analyses can be somewhat confusing as there are multiple post-policy months with some of them having observed outcomes lying outside of the predictive envelope and others not.
In this section, we discuss inference more formally and discuss how to increase power by averaging the outcomes of post-policy months together.
For this averaging we can either average a fixed range of months, or use methods akin to a sliding window by nonparametrically smoothing the observed trends to account for month-to-month variation.
This sliding window approach is appealing in that we can display an entire curve of impacts post-policy, which allows for a more nuanced interpretation of how a policy may have evolved over time.

\subsection{Inference}
Consider the null hypothesis of there being no change in the pre-policy trend (and that we have correct model specification).
In this case, our simulated series are all plausible forecasting series, given the pre-policy data.
For any given point $T > t_0$, we can therefore examine the distribution of simulated values at $T$ to see how much variability we would see under the null hypothesis.

In this view, we use our observed outcome $Y^{obs}_T$ as the observed value of a test statistic: we compare this observed value to the simulated values that capture what our model says is possible.
If the observed value is outside the central range of these simulated values (which we consider our reference distribution), we reject the null that the pre-policy trend continued unabaited (again assuming the pre-policy model is correct).
We could do this for each $T > t_0$.

While reasonable and sound, there are two concerns: first, we have a multiple testing issue.  If the series is long enough, we are bound to find some points outside their respective predictive ranges simply due to random fluctuation. 
Second, we have a power issue.  
We are comparing our test statistic, a potentially highly variable single point $Y^{obs}_T$, to a distribution of simulated values $Y^*_{T}$ that all themselves could be quite variable.
If the policy caused a modest reduction in $Y^{obs}_t$ for all $t > t_0$, it is possible that no individual $T > t_0$ would look significantly reduced when examined in isolation.

As a contrast to testing a specific point in time, we might instead test for a systematic and sustained shift in the outcomes over a range of times post-policy.
In order to test a larger sequence of time points, we need to combine our observed data into some sort of average and compare that \emph{average} to the distribution of averages we would have likely seen under the null.

The simplest approach to do this is to simply average all the outcomes in a pre-specified range of months post-policy.
We then compare this simple average to the distribution of simple averages calculated from the distribution of plausible trajectories.
The key point is once we have our distribution of plausible trajectories, we can test our null hypothesis by comparing a summary statistic of our outcome to the distribution of that same summary statistic calculated on our trajectories.
To be specific, take our observed series $Y = (Y_1, \ldots, Y_T)$ and calculate our summary $t^{obs} = t( Y )$, where $t(\cdot)$ a function that takes our data and summarizes it in some way (e.g., by calculating the average of $Y_{t_0+1}, \ldots, Y_{T}$).
Next, for each simulated series $Y^{*(r)}$, calculate $t^{*(r)} = t( Y^{*(r)} )$, and then calculate the $\alpha/2$ and $1-\alpha/2$ quantiles $t_{(\alpha/2)}$ and $t_{(1-\alpha/2)}$ of these $t^{*(r)}$.
Our prediction interval of what value of the summary statistic we would expect to see is then $CI = (t_{(\alpha/2)}, t_{(1-\alpha/2)})$.
If $t^{obs} \not \in CI$, we reject our null hypothesis.
We calculate nominal $p$-values using the percentile $q$ of our observed $t^{obs}$, with $p = \min( q, 1-q )$ (for a two-sided test).

Testing in this way is akin to posterior predictive checks of model fit \citep{rubin1984bayesianly,guttman1967use}: we want to know if the model fit to pre-policy data fits our post-policy observed data.
If it does not, we reject the model, i.e., conclude that something changed our trajectory.
Our $p$-values are called \emph{posterior predictive $p$-values}, and do not necessarily have strictly valid frequentist properties, but they are argued to generally be conservative \citep{meng1994posterior}.
Also see \cite{robins2000asymptotic}.

\subsection{Smoothing}
In investigating a place-based initiative we generally want to understand the evolution of the impact over time.
For example, with Mecklenberg, it appears as if the policy induced a large reduction in the rate of bail setting a few months into the post-policy period, with that level of bail setting generally sustained over time.
If we only use the simple averaging method from above, and did not look at the overall graph of impacts, we would lose this nuance.
But the raw graph is noisy, making trends somewhat difficult to discern.
We therefore might want to smooth the trend in the graph to, as much as possible, remove month-to-month variation.
Smoothing is when one locally summarizes a trend to remove some variation, ideally without imposing a global structural so local structure is preserved \citep{cleveland1993visualizing}.
Smoothing is generally nonparametric, and can be done with splines, averaging within a sliding window, or using loess (Locally Weighted Smoothing) (\citet{cleveland1981lowess}, but see \citet{cleveland1993visualizing}).
Smoothing can make communication with various stakeholders easier, as it removes random variation that may draw ones attention if not removed; see, e.g., discussions in \cite{starling2019targeted}.

We can easily use smoothing coupled with our inferential approach above.
In particular, we smooth each simulated time series using a specific (pre-specified) method.
We then compare the distribution of these smoothed time series to the actual time series smoothed in exactly the same way.
Under our null hypothesis, the smoothed observed trajectory should be exchangeable with any of the smoothed simulated trajectories. 
Our smoothed estimate at a given timepoint $T$ is now our test statistic, and the distribution of smoothed estimates of our simulated series our prediction distribution of what values we might have expected.
This should have greater power: we are now examining the overall trend in the neighborhood of $T$, potentially increasing precision as idiosyncratic monthly variation gets averaged out.

One caveat is that if we smooth across $t_0$ we can cause the smoothed line of our observed series to artificially deviate \emph{pre-policy} since the post-policy points will be included in the local average near the policy change.
Similarly, the pre-policy timepoints near $t_0$ can drag the smoothed post-policy timepoints near $t_0$ towards their values, potentially masking impacts.
To avoid this, one can smooth the post-policy series only, not including any pre-policy points; if this is done, then it needs to be done for both the simulated series as well as the observed series.
The key is to implement the same process on all series, simulated and observed, to maintain the validity of the comparison.

\begin{figure}
\centering
\begin{subfigure}[t]{.48\textwidth}
  \includegraphics[width=\textwidth]{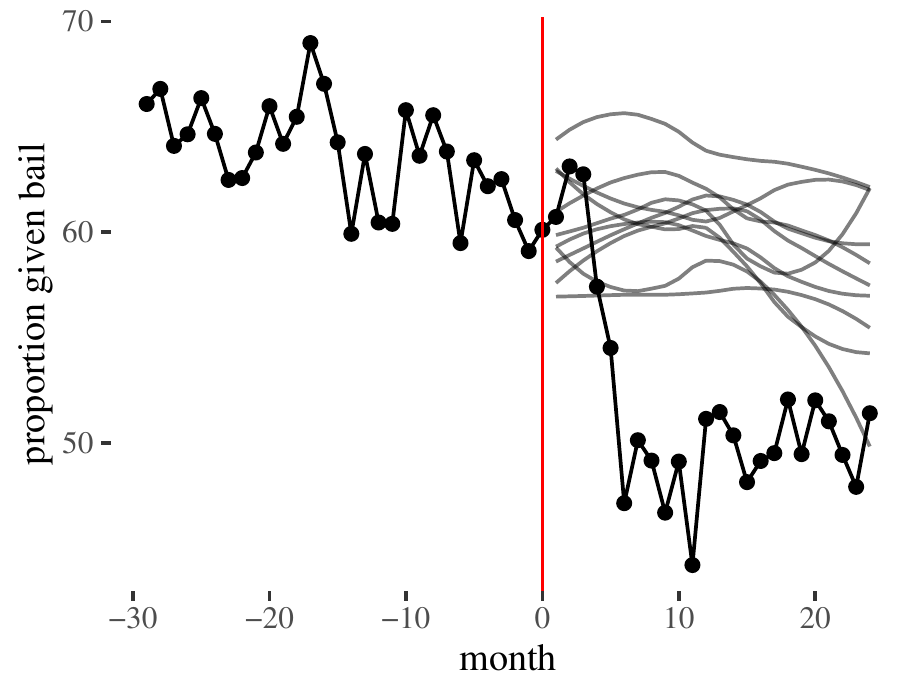}
  \caption{Ten extrapolated smoothed series}
  \label{fig:meck10trajsmooth}
\end{subfigure}
\begin{subfigure}[t]{.48\textwidth}
  \includegraphics[width=\textwidth]{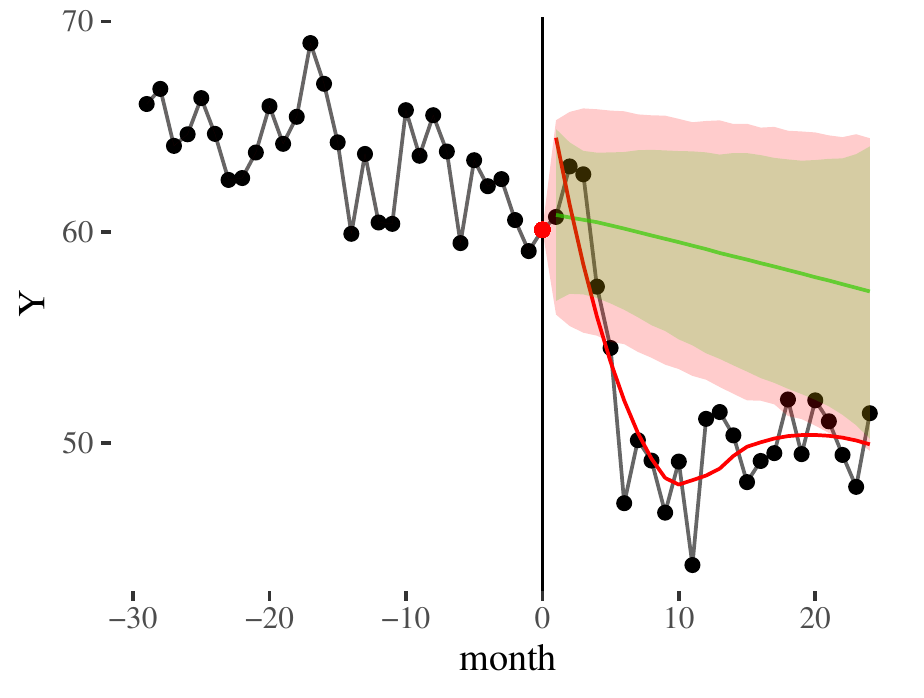}
  \caption{10,000 smoothed simulations}
  \label{fig:meckenvelopesmooth}
\end{subfigure}
\caption{Results of Mecklenberg analysis (with smoothing).}{Left shows how smoothed trajectories have less variability than the raw series did. Right compares smoothed envelope with envelope without smoothing. We see less variability. The red line denotes the smoothed observed trend to be compared to the envelope and counterfactual predicted trend.}
\label{fig:meckresults_smooth}
\end{figure}

\subsection{Mecklenberg County, continued}

We continue our Mecklenberg example by showing how to improve power using both averaging and loess smoothing.
We initially average the outcomes for the initial 18 months after $t_0$.
In our data, we observe an average bail rate of 52\%.  
The middle 95\% prediction interval of the averages of our simulated series ranges from 55\% to 64\%.
We therefore conclude that something changed the pre-policy trajectory so we are seeing lower average rates of bail-setting than we would have expected.
If we take the difference to get estimated impacts we obtain a 95\% confidence interval (technically a credible interval) for the true average impact being in $(-3\%, -12\%)$.
To get a point estimate for the average impact, we average the simulated averages, predicting an average bail setting of $59\%$ and estimated impact of $-7$ percentage points.

If we look at a tighter range of months (which we would ideally have pre-specified) of 6 months to 18 months, we observe an average of 49\%, a corresponding prediction interval of 54\% to 64\%, and an estimated impact of between $-5$pp and $-15$pp.
Choice of summary measure can substantially matter here as they will differently weight what are often quite heterogeneous impacts across time.

We also use loess smoothing to smooth the post-treatment trajectory.
We first smooth our observed series with a loess smoother fit to the post-policy data only to avoid any influence of pre-policy points on our resulting line.
We then fit the same smoother to each of our simulated series, ignoring the pre-policy points there as well.
Results are on Figure~\ref{fig:meckresults_smooth}.
Figure~\ref{fig:meck10trajsmooth} shows 10 smoothed trajectories in the post-policy period.
Figure~\ref{fig:meckenvelopesmooth} shows the envelope based on these trajectories, along with the smoothed observed line and, in the background, the original envelope without smoothing.
The smoothed observed curve is arguably easier to read than the raw data.
We also see precision gains from the smoothing process, which stabilizes the estimation.
Also note the wider envelope at far left; this is due to loess smoothers being more variable at endpoints.

Smoothing does require specifying a tuning parameter of how much to smooth.
For loess, for example, we essentially specify what fraction of the data should be used to calculate the smoothed outcome at each time point.
If we smooth a lot, then local variation in the structure will be removed, but the lines will be more stable.
If we smooth little, then we do not really average local points, and thus our variance will remain high.
This is a bias-variance tradeoff in the estimation and visualization.

\subsection{Smoothing with seasonality.}
When the model has a seasonality component causing oscillation, a simple loess smoother might dampen the oscillations, creating a smoothed series that is more flat than the data.  
This not only looks odd, but can be deceptive.
But, as discussed above, we can smooth in any fashion we choose, as long as we smooth our observed data in the same way as the simulated.
This allows for the following multi-step smoothing approach that smooths the residual variation around the structural component of a seasonality model.
For each time series (observed or simulated) smooth as follows:
\begin{enumerate}
\item Fit a working seasonality model to the data.  This is not the original seasonality model, but a new model. There is no need for lagged outcomes or uncertainty estimation in this model. As before, we can choose to fit to post-policy data only, all data, or pre-policy data.
\item Predict all the outcomes given our seasonality model.
\item Calculate the residuals by subtracting the predicted outcomes from the actual outcomes.
\item Fit a loess smoother (or some other smoother) to the residuals (again choosing whether to focus on post-policy only or all data).
\item Add the smoothed residuals back to the predictions to get a final smoothed curve.
\end{enumerate}
This process strips the estimated approximation of the structural component from the series and sets it aside to prevent it from being smoothed or averaged out.
Step (5) puts it back so our final series still maintains the overall structure.
In particular, any estimated seasonality component will not get smoothed out.
The key idea is that our model used for smoothing does not need to be a correctly specified model; it is purely to set aside any seasonal structure so it does not get over-smoothed.



\subsection{New Jersey, continued}

To demonstrate smoothing with a seasonality model we extend our analysis of warrant arrests.
We compare two methods of smoothing, using the same model for extrapolation, but calculating our residuals using two different models.
In one we use the base (non-lagged) model with quarter and temperature, and in the other we fit the sinusoidal model without temperature.
Our second model intentionally smooths away month-to-month variability due to fluctuating temperature in both our simulated and observed series, even though we use the temperature to fit and extrapolate our data to obtain our predictive series before smoothing.
The results are on Figure~\ref{fig:nj_results_smoothed}. The left has preserved the month-to-month variation predicted by the temperature changes, giving a more jagged sequence.
The right, by contrast, is smoother, showing underlying structure more clearly. 

\begin{figure}
\centering
\begin{subfigure}[t]{.48\textwidth}
  \includegraphics[width=\textwidth]{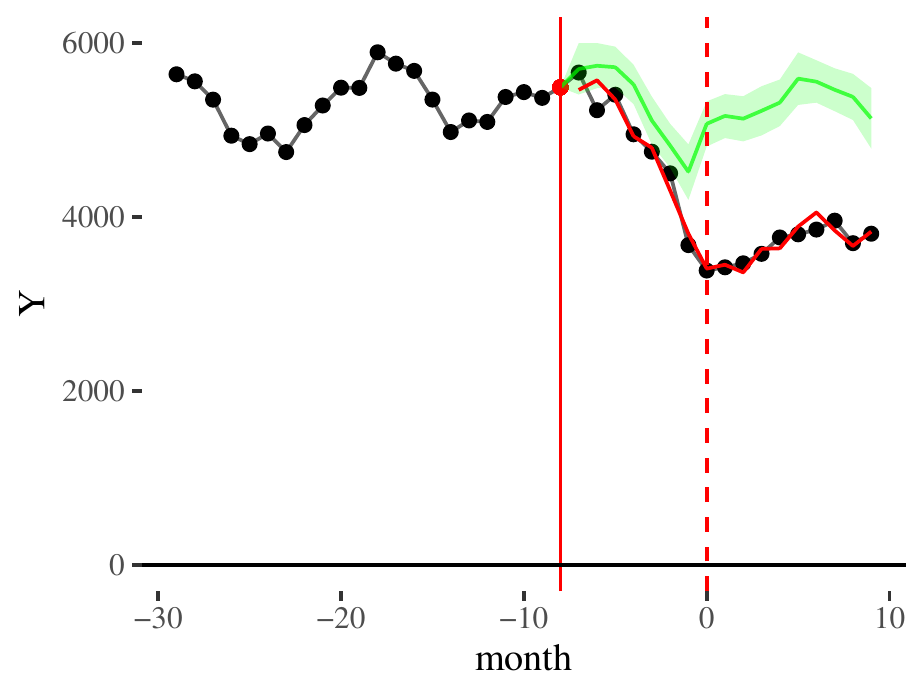}
  \caption{Base smoothing model}
  \label{fig:nj_results_smoothed_base}
\end{subfigure}
\begin{subfigure}[t]{.48\textwidth}
  \includegraphics[width=\textwidth]{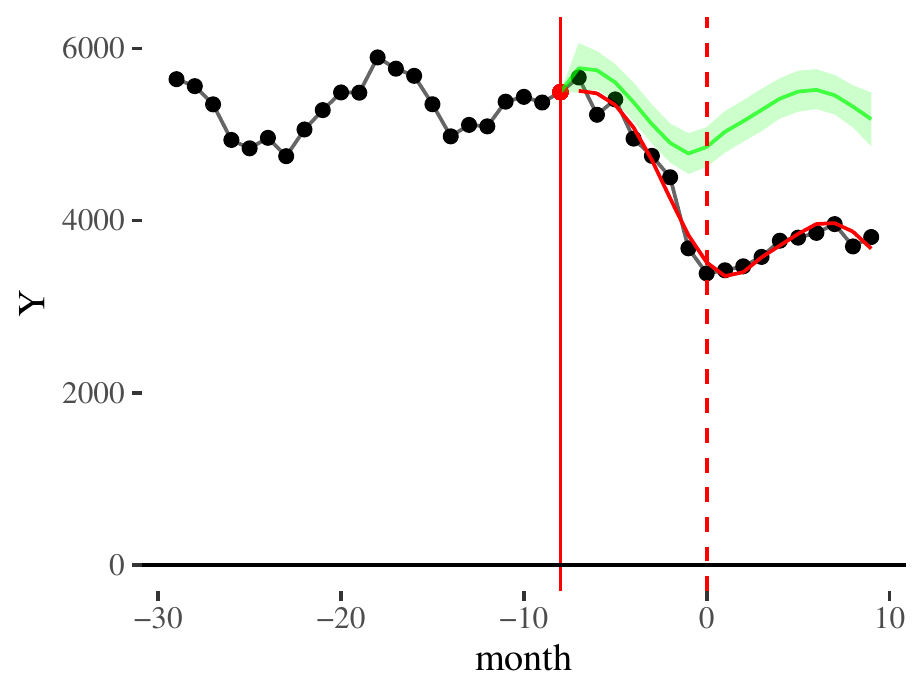}
  \caption{Sinusoidal smoothing model}
  \label{fig:nj_results_smoothed_sin}
\end{subfigure}
\caption{Prediction envelopes for number of warrant arrests using smoothing.}{Time period and $y$-axes truncated to show more detail of model fitting in post-policy era.}
\label{fig:nj_results_smoothed}
\end{figure}

\paragraph{Remark.} 
If we do not fit the same model to the same range of data across all series, smoothing in the observed series could cause different distortions than in the reference series.
This could create systematic differences even if the null of no treatment impact were true.
However, if there is a large initial treatment impact, a model fit to the full observed series could be misspecified.
This could give an odd smoothed series for the observed data.
Regardless, as long as the model fitting process is held to be the same, then comparing the observed series to the reference distribution is still valid for testing.
We recommend selecting a smoother that is not overly dependent on the pre-policy patterns, but instead naturally fits to the observed post-policy data.
In particular, we suggest fitting the seasonality model to the post-policy data only.

\section{Conclusion}

We have demonstrated a simple modeling (linear regression with lagged outcomes and covariates) and simulation framework for capturing uncertainty for Interrupted Time Series designs.
These designs often appear when attempting to assess the impact of a policy change on a single region of interest when there are no good comparison regions available.

Our modeling framework allows for the easy incorporation of seasonality models and of smoothing in a straightforward manner.
It also naturally allows for incorporation of autoregressive structure to better account for overall uncertainty.
Finally, we argue that this approach naturally lends itself to generating clear visualization of impacts and transparent reporting of results.

In this work, we have examined ITS designs with many pre-policy timepoints; with fewer timepoints estimating the autoregressive component of the model will generally be much more difficult.
In maximum likelihood approaches, it is known that this can cause bias and poor coverage when there are only 5 or so observations (see, e.g., \cite{st2016validity}).
We leave whether simulation, simulation which specifically incorporates the uncertainty in the estimated lagged coefficients, would help in these short ITS designs to future work.

This approach could also be extended to power calculations.
Minimal detectable effect size (MDES) and power depend on several factors: the number of cases per month, the month-to-month variability beyond natural variation due to the cases, the number of months of pre-policy data, and the desired window of predicting impacts after the policy implementation. 
Each of these can heavily influence the ability to detect effects.
One way forward is to again turn to simulation.
In particular, given specific parameterized values for the factors listed above, one could repeatedly simulate a dataset, and then analyze that dataset using the above simulation approach as an inner step.
For each initially simulated dataset we would then record the width of the simulated extrapolations.
The average width of these prediction intervals at each time point could then be tied to MDES.

Finally, the modeling itself could also potentially be extended and enriched to better capture some data contexts.
For example, if the number of individual cases changed substantially over the course of a series, we might want to let our residual error be a function of sample size to capture differing levels of precision \citep[see, e.g.,][]{Ferman2015}.
One approach would be to regress residual size onto number of cases, giving an intercept and slope which would represent core month-to-month variability and within-month variability, and use this decomposed variation in the autoregressive model.

With ITS, there are some concerns with interpretation, in particular in the case of a dynamic system.
For example, if the impacts in early post-policy months are creating a feedback loop (e.g., changing patterns in detention causing changes in the patterns of new charges) then the mix of individual cases constituting the overall region may be changing as a result of the policy change.
This further underscores that interpreting impacts has to occur at the region level, which naturally takes these changes into account.
In particular, a reduction of bail rates could potentially be due to the policy changing the cases themselves, rather than be due to changes in how cases are being handled.
Ideally we thus should focus on measures that are of interest when viewed at the aggregate level.

And finally, fundamentally, we note that all that this type of analysis can show us, using this method or any other, is that the trend has changed in a surprising way. 
Why it did so, the statistics cannot answer.
The researcher in the end must turn toward substance matter knowledge and argument to defend the proposition that a found change was caused by the policy shift.

\bibliographystyle{chicago}
\bibliography{references2}





\newpage

\begin{center}
\huge
Supplementary Material for \\
``Using Simulation to Analyze \\ Interrupted Time Series Designs''
\end{center}

This online supplement contains additional discussion and results to complement the main paper. Appendix A contains a few lemmas and an extended discussion of the technical details. Appendix B gives further explanation of why we need to include lagged seasonal covariates, and also discusses some alternative modeling approaches one might take.  Appendix C gives an overview of how we could adjust for time-varying covariates (e.g., changing mix of cases) in an analysis.  Appendix D has a demonstration of the R package that accompanies this paper.

\section*{Appendix A: A few lemmas and extended discussion}

We here provide the small lemmas and derivations that complement the text.  After we have a few further remarks about the approach, including some notes as to what to do if the autoregressive coefficient is estimated to be larger than 1 or less than 0.

\subsection*{Standard errors for classic OLS-based ITS}

\begin{lemma}
Under the classic OLS approach, the standard errors for the $\widehat{\Delta}_k$, $k > t_0$ are 
\[ \widehat{SE}\left[ \widehat{\Delta}_k \right] = \left( 1 + S_{00} + 2 k S_{10} + k^2 S_{11} \right)^{1/2} \hat{\sigma} \geq 1, \]
with the $S_{00}$, $S_{11}$ and $S_{01}$ being the elements of the variance-covariance matrix one would obtain for the coefficients of a simple intercept-slope regression of the outcome on the prepolicy data only.
\end{lemma}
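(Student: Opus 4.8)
The plan is to reduce everything to the ``fit pre-policy only'' formulation in Model~\ref{eq:control_equation}, where $\widehat{\Delta}_k = Y_k^{obs} - \hat\beta_0 - \hat\beta_1 k$ and $(\hat\beta_0,\hat\beta_1)$ are the OLS coefficients from regressing $Y_t$ on $(1,t)$ using only $t = t_{min},\dots,t_0$. The text already records that these point estimates coincide with those from Model~\ref{eq:classicITS} (the post-policy dummies fit those months exactly); the same degrees-of-freedom bookkeeping shows the residual-variance estimate is also unchanged, since in Model~\ref{eq:classicITS} the $n_{post}$ post-policy months contribute $0$ to the residual sum of squares while their $n_{post}$ dummy coefficients subtract exactly $n_{post}$ from the degrees of freedom, leaving $\hat\sigma^2 = \mathrm{SSR}_{pre}/(n_{pre}-2)$, the usual estimate from the pre-policy regression. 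So it suffices to compute $\mathrm{Var}(\widehat{\Delta}_k)$ under the model and then substitute $\hat\sigma^2$ for $\sigma^2$; note that this is a \emph{prediction} error variance, not merely an estimation variance.

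Next I would write the error decomposition. Under the no-anticipation model, $Y_k^{obs} = \beta_0 + \beta_1 k + \Delta_k + \epsilon_k$ for $k > t_0$, so
\[
\widehat{\Delta}_k - \Delta_k \;=\; \epsilon_k \;-\; \bigl[(\hat\beta_0 - \beta_0) + k\,(\hat\beta_1 - \beta_1)\bigr].
\]
The key structural point is that $\epsilon_k$ is independent of $(\hat\beta_0,\hat\beta_1)$: the latter is a linear function only of $\{Y_t : t \le t_0\}$, hence of $\{\epsilon_t : t \le t_0\}$, and $\epsilon_k$ with $k > t_0$ is independent of those by the iid residual assumption in Model~\ref{eq:classicITS}. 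Hence the variance decomposes into $\mathrm{Var}(\epsilon_k) = \sigma^2$ plus the variance of the (linear) prediction error. Writing $X$ for the $n_{pre}\times 2$ pre-policy design matrix with rows $(1,t)$ and $S = (X'X)^{-1}$ (the ``unscaled'' coefficient covariance matrix, so $\mathrm{Cov}(\hat\beta_0,\hat\beta_1) = \sigma^2 S$, with entries $S_{00}, S_{01}=S_{10}, S_{11}$ as in the statement), standard OLS algebra gives $\mathrm{Var}[(\hat\beta_0-\beta_0)+k(\hat\beta_1-\beta_1)] = \sigma^2\,(1,k)\,S\,(1,k)' = \sigma^2(S_{00} + 2k S_{10} + k^2 S_{11})$. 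Adding the two contributions and replacing $\sigma^2$ by $\hat\sigma^2$ gives the claimed $\widehat{SE}[\widehat{\Delta}_k] = (1 + S_{00} + 2kS_{10} + k^2 S_{11})^{1/2}\,\hat\sigma$.

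For the lower bound, $S = (X'X)^{-1}$ is positive definite, so the quadratic form $(1,k)S(1,k)' \ge 0$ and the factor multiplying $\hat\sigma$ is at least $1$, i.e.\ $\widehat{SE}[\widehat{\Delta}_k] \ge \hat\sigma$ (interpreting the ``$\geq 1$'' in the displayed statement as the claim that this factor is at least one). There is no genuine obstacle here — it is a routine prediction-variance computation — but two points deserve care in the write-up: (i) the independence of $\epsilon_k$ from the fitted coefficients, which is precisely where it matters that post-policy outcomes never enter the estimation of $(\hat\beta_0,\hat\beta_1)$; and (ii) the degrees-of-freedom accounting identifying the full-model $\hat\sigma^2$ with the pre-policy-only estimate. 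A worthwhile remark to include is that the irreducible ``$1$'' is the fresh month-$k$ noise $\epsilon_k$, which is exactly why the standard error can never be pushed below $\hat\sigma$, and why the (false) iid assumption buried in this derivation is the binding issue the rest of the paper sets out to fix.
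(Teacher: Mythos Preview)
Your proof is correct and takes a genuinely different route from the paper's. The paper works directly in the full Model~\ref{eq:classicITS}: it writes the $(2+K)\times(2+K)$ matrix $X'X$ in block form, inverts it via the block-inversion formula and the Woodbury identity, and shows that the bottom-right block governing the $\widehat{\Delta}_k$ equals $I_K + B'(X_{pre}'X_{pre})^{-1}B$, from which the diagonal entries are read off. You instead switch at the outset to the pre-policy-only formulation, identify $\widehat{\Delta}_k$ as a prediction error, and split its variance into the fresh-noise piece $\sigma^2$ and the extrapolation piece $\sigma^2(1,k)S(1,k)'$ via the independence of $\epsilon_k$ from the pre-policy fit. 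Both arguments land on the same formula and both invoke positive (semi)definiteness of $S$ for the inequality. Your route is more elementary---no Woodbury, no block inversion---and makes the statistical content (the ``$1$'' is the irreducible month-$k$ noise) transparent, which dovetails with the paper's later emphasis on that term. The paper's route has the complementary virtue of being a direct verification that the $(X'X)^{-1}$ a regression package reports for Model~\ref{eq:classicITS} really has this structure, without appealing to the equivalence with the pre-policy-only fit. Your degrees-of-freedom bookkeeping showing the two $\hat\sigma^2$'s coincide is a useful addition the paper does not spell out.
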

\begin{proof}
The design matrix $X$ of our regression consists of a column of 1s, a column with the time values $1, \ldots, T$, and one column for each post-policy timepoint, where the column for time point $k$ has all 0s except a single 1 at row $k$.
Given this, $X'X$ is a $(2+K) \times (2+K)$ block matrix:
\[ X'X = \left[ {\begin{array}{cccccc}
   T & \sum_t t & 1 & 1 & \cdots & 1 \\
    & \sum_t t^2 & t_0+1 & t_0+2 & \cdots & t_0+K \\
    & 			& 	1 & 0 & \cdots & 0 \\
    & 			& 	0 & 1 & \cdots & 0 \\
   \cdot & \cdot & \cdot & \cdot & \cdots & \cdot \\
   &	&			&	 & \cdots & 1
  \end{array} } \right] = \left[ {\begin{array}{cc}
   A & B \\
   B' & I \end{array} }  \right], \]
where $K$ is the number of post-policy timepoints (and the number of our different $\Delta_k$ we are estimating).
Note the bottom-right block is a $K \times K$ identity matrix.
$A$ corresponds to what we would get from the simple linear regression of $Y$ on the months $1, \ldots, T$ with an intercept.

Classic OLS gives our standard errors for our coefficients as the diagonals of $(X'X)^{-1}\hat{\sigma}$.
We next calculate $(X'X)^{-1}$.
The inverse $(X'X)^{-1}$ will give another symmetric matrix defined by the block matrix
\[ (X'X)^{-1} = \left[ {\begin{array}{cc}
   (A-BI_K^{-1}B')^{-1} & -A^{-1}B(I_K-B'A^{-1}B)^{-1} \\
    & (I_K - B'A^{-1}B)^{-1} \end{array} }  \right] . \]
The standard errors for our impact estimates are governed by the bottom-right corner of the above.
We can simplify the bottom-right corner by using the Woodbury identity of 
 \[ (a + cbc')^{-1} = a^{-1} - a^{-1}c(b^{-1}+c'a^{-1}c)^{-1}c'a^{-1} \]
and $(-M)^{-1} = -M^{-1}$ to get
 \begin{align*}
 	(I_K - B'A^{-1}B)^{-1} &= I_K^{-1} - I_K^{-1}B'\left[ -A + B(I_K)B'\right]^{-1} BI_K^{-1} \\
 	&= I_K - B'\left[ -A + BB'\right]^{-1}B \\
 	&= I_K + B'\left[ A - BB' \right]^{-1}B .
 \end{align*}
We can then simplify the second term further.
Let $X_{pre}$ be the entries in the first $T-k$ rows and the first two columns of the design matrix.
This is the design matrix of the simple regression on pre-policy units only.
Then we have $A - BB' = X_{pre}'X_{pre}$.
To see this note that $BB'$ has the form
\[ BB' = \left[ {\begin{array}{cc}
   k & \sum_{j=1}^k T - k + 1 \\
    & \sum_{j=1}^k (T - k + 1)^2  \end{array} }  \right] . \]
Thus, subtracting $BB'$ from $A$ simply takes off the last elements of the sums.
Therefore $(A - BB')^{-1} = (X_{pre}'X_{pre})^{-1} = S$, the variance-covariance matrix for the coefficients of our simple pre-policy regression.
This gives
\[ (X'X)^{-1} = \left[ {\begin{array}{cc}
   (X_{pre}'X_{pre})^{-1} & -A^{-1}B(I_K-B'A^{-1}B)^{-1} \\
    & I_K + B' (X_{pre}'X_{pre})^{-1} B \end{array} }  \right] . \]

Finally, take the $r^{th}$ column of $B$, which corresponds to the $r^{th}$ post-policy period at time $t_0 + r$. This column is $(1, t_0 + r)$.
We then have
\begin{equation}
\left( B'(A - BB')^{-1}B \right)_{rr} = \left( B'SB \right)_{rr} = S_{00} + 2 (t_0 + r) S_{10} + (t_0+r)^2 S_{11} \geq 0 . \label{eq:delta_SE_expr}
\end{equation} 
The inequality is because $B'SB$ will have a nonnegative diagonal as $v'Sv \geq 0$ for any vector $v$ due to $S$ being positive semi-definite. 
Since this bottom right corner of the variance-covariance matrix is the above plus the identity matrix, we finally have our result.

\end{proof}

This lemma shows how our fundamental error is due to the initial 1 in the sum.  The remaining terms, in particular the $(t_0+r)^2 S_{11}$ term, correspond to the standard errors in the intercept and slope estimated on pre-policy data being extrapolated.
In particular, note how as $r$ increases, these terms grow quadratically in the variance and linearly in the standard error.
This is due to uncertainty in the slope causing increasingly large levels of extrapolated uncertainty.

\subsection*{Connection of lagged outcome model to residual dependence}

Our original goal was to allow for dependencies of the residuals in our linear model, but we used lagged outcomes instead.
These are the same thing, with different interpretation of the parameters in the model.
To see that a lagged model on $Y$ is the same as a model with lagged dependent residuals, take our simple regression model (Equation~\ref{eq:control_equation}) with its lagged residual model (Equation~\ref{eq:residual}).
Then
\begin{equation*}
\epsilon_{t-1} = Y_{t-1} - \beta_0 - \beta_1(t-1) 
\end{equation*} 
and
\begin{align*}
Y_t &= \beta_0 + \beta_1 t  + \left[ \rho \epsilon_{t-1} + \omega_t \right] \\
	&= \beta_0 + \beta_1 t  + \rho\left[ Y_{t-1} - \beta_0 - \beta_1(t-1) \right] + \omega_t \\
	&= \beta_0 + \rho( \beta_1 - \beta_0) + \beta_1(1-\rho) t + \rho Y_{t-1} + \omega_t \\
	&= \tilde{\beta}_0 + \tilde{\beta}_1 t + \tilde{\beta}_2 Y_{t-1} + \omega_t .
\end{align*}
This shows $\tilde{\beta}_0 = \beta_0 + \rho( \beta_1 - \beta_0)$, $\tilde{\beta}_1 = \beta_1(1-\rho)$, and $\tilde{\beta}_2 = \rho$.
Fitting our lagged outcome model will give estimates for $(\tilde{\beta}_0, \tilde{\beta}_1, \tilde{\beta}_2)$.
We can then convert them to our target $(\beta_0, \beta_1, \rho$); it is simply a different parameterization.
Some algebra gives the conversions to the residual model as
\begin{align*}
	\rho &= \tilde{\beta}_2 \\
	\beta_1 &= \frac{1}{1-\rho} \tilde{\beta}_1 \\
	\beta_0 &= \frac{1}{1-\rho} \tilde{\beta}_0 - \frac{\rho}{\left(1-\rho\right)^2} \tilde{\beta}_1 .
\end{align*}
The estimated residual variance is the variance of the $\omega_t$.

\paragraph{Properties of the residuals.} 
For our lagged residual model, we have $\EE{ \epsilon_t } = 0$ and, for any given $t$,
\begin{align*}
 var( \epsilon_t ) &= \EE{ (\epsilon_t - 0)^2 } \\
  &= \EE{ \rho^2 \epsilon_{t-1}^2 } + 2 \rho \EE{ \epsilon_{t-1} \omega_t } + \EE{ \omega_t^2 } \\
  &= \rho^2 \sigma_\epsilon^2 + 0 + \sigma^2 .
\end{align*}
This gives 
\[ var( \epsilon_t ) = \frac{1}{1-\rho^2} \sigma^2 .\]
We also have
\[ cov( \epsilon_t, \epsilon_{t-1} ) = \EE{ \epsilon_t \epsilon_{t-1} } = \EE{ (\rho \epsilon_{t-1}+ \omega_t) \epsilon_{t-1}  } = \rho \sigma_\epsilon^2 \]
meaning that the residuals have correlation $\rho$.

\paragraph{Prediction intervals capture month-to-month variability.}
To see why this is the case, note how our autoregressive series gives a conditional prediction for each time point: given time $T-1$, our prediction for time $T$ is the structural component plus the autoregressive part of the residual.
Under this view, write the final predicted outcome at time $T$ as an average of the conditional predictions given time $T-1$:
\[ \widehat{Y}_{T} = \frac{1}{R} \sum_{r = 1}^R \left[ \beta_0^{(r)} + \beta_1^{(r)} T  + \beta_2^{(r)} Y_{T-1}^{*(r)} \right] . \]
Even though there is no final $\epsilon_{T}^{(r)}$ in the above it is equivalent, up to simulation error, to the simple average of the simulated $Y_{T}^{*(r)}$ because the average of the $\epsilon_{T}^{*(r)}$ is 0.
This shows that the variation in our predictions combines the variation in the prediction itself with the additional $\epsilon_{T}$, which is the independent variability.
Under this decomposition, the variation due to the $\epsilon_{T}$ is the variation of the observed series, and the remainder is the variation of the structural trend and autoregressive variation.
All of this depends on correct model specification, in particular, the assumption that our observed post-policy series has the same month-to-month variation as our pre-policy series (i.e., homoskedasticity).

\subsection*{Handling overly large or small estimates of $\rho$}

Estimation involves uncertainty, and when fitting a lagged variable we have a range of possible coefficients for $\rho$ that could include values larger than 1 or less than 0.
This can cause difficulties; in particular, if the uncertainty on the coefficients carries the coefficient for the prior Y to more than 1, those associated projections will compound exponentially and be nonsensical.
This happens when there is little model stability in the fitting of the model (e.g., with only a few months of pre-policy data, in particular), or if there are large nonlinearities in the pre-policy.
In the latter case, the estimated $\rho$ coefficient can be estimated as relatively large to compensate for the model misspecification.
If the coefficient is negative, the predictions can oscillate, again in a nonsensical manner.
If either happens only in the extreme draws of the posterior, there is no major concern as the prediction intervals will trim these extremes.
If they are more frequent, the confidence intervals will give overly wide ranges that signal the model fitting issues.

\section*{Appendix B: Why include lagged seasonality covariates}

We start with a general structural model with autoregressive residuals of
\begin{align*}
 Y_t(0) &= f_\beta( X_t ) + \epsilon_t ,
\end{align*}
with $f_\beta(X_t)$ being a model of covariates $X_t$ (where $X_t$ is a vector of potentially time-varying covariates including $t$ itself) indexed by  some parameter vector $\beta$.
The $f_\beta(X_t)$ is the structural aspect of our model and our residuals are then $\epsilon_t = Y_t - f_\beta( X_t )$.  
We assume that once we remove the structure we have a stationary\footnote{``Stationary'' means the autoregressive structure is constant across time, i.e., that the auto-correlation remains the same.} autoregressive process given by Equation~\ref{eq:residual} on the residuals.
To connect to the above, we have heretofore assumed that $f_\beta(X_t) = \beta_0 + \beta_1 t$ with $\theta \equiv \left( \beta_0, \beta_1 \right)$.

With this more general model, using Equation~\ref{eq:residual} and the consequent $\epsilon_{t-1} = Y_{t-1} - f_\beta( X_{t-1} )$, we have
\begin{align}
Y_t &= f_\beta( X_t ) - \rho \epsilon_{t-1} + \omega_t \nonumber \\
&= f_\beta(X_t)  + \rho\left[ Y_{t-1} - f_\beta( X_{t-1} ) \right] + \omega_t \nonumber \\
&= f_\beta( X_t ) - \rho f_\beta( X_{t-1} ) + \rho Y_{t-1} + \omega_t . \label{eq:general_lag_model}
\end{align}

Now consider the case when $f_\beta$ is a linear model, with $f_\beta( X_t ) = X_t'\beta$.
This is the case presented in the main paper.
For example, for Model~\ref{eq:temp_model} we would have $X_t = (1, t, Q_{t2}, Q_{t3}, Q_{t4}, Temp_t)$.
Plugging $X'_t \beta$ in to the more general Equation~\ref{eq:general_lag_model} gives
\begin{align*}
 Y_t &= X_t'\beta - \rho X_{t-1}' \beta + \rho Y_{t-1} + \omega_t \\
	&= X_t'\beta - X_{t-1}' \beta_\ell + \rho Y_{t-1} + \omega_t ,
\end{align*}
with $\beta_\ell = - \beta \rho$.
The second line above shows that, if we do not insist on keeping the structure of the same $\beta$ in both the $X_{t}$ and the $X_{t-1}$ terms in the above, we can simply regress $Y_t$ on $X_t$, $X_{t-1}$ and $Y_{t-1}$, dropping the constraint of $\beta_\ell = -\beta \rho$.

We here repeat the technical caveat from the main text: the lagged covariates can frequently be collinear with the contemporaneous covariates, thus producing an overall design matrix that is not full rank.
For example, if we include a linear time component by including the covariate $X_{t,2} = t$ as one of the columns of our design matrix, the design matrix with our lagged covariate of $X_{t,k} = t - 1$ will clearly be fully collinear with $X_{t,2}$.
This colinearity is easily resolved, however: simply drop any collinear columns (in particular the intercept and time variables), allowing the parameters to estimate the combined influence of both the primary observation and the structural component of the lagged outcome due to that variable.

\paragraph{Remark.} This model fitting process relaxes some of the structure of the parameters.
In particular, we no longer enforce $\beta_\ell = -\beta \rho$.
However, given the consistency of estimation for linear regression, we immediately have that as $n$ increases our Ordinary Least Squares (OLS) approach will converge on the correct parameterization, giving overall consistency even if we drop this constraint.
Interestingly, as a model check, given the relaxation one could compare $- \hat{\beta} \hat{\rho}$ to $\hat{\beta}_\ell$.  
They should be the same, up to estimation error.

To further examine why we have the lagged covariates in our model, write the above as
\[ Y_t = (X_t - \rho X_{t-1})' \beta  + \rho Y_{t-1} + \omega_t . \]
This formulation suggests that our regression is, in effect, regressing our outcome onto the differences of our covariates (including any linear time term or intercept) with the lagged covariates scaled by our unknown $\rho$ parameter.
We have this differencing component because the lagged $Y_{t-1}$ also includes those structural components and we would not want them to be counted twice.
Overall, the lagged covariates allow us to estimate and then subtract out the lagged structural part of the $Y_{t-1}$, leaving the lagged residual as desired.

\paragraph{Alternate estimation strategies.}

We use the lagged outcome model and OLS due to its simplicity, and to take advantage of the ability to easily simulate from the fitted model.
Other options are possible, which we briefly acknowledge and outline next.

First, there is \emph{Feasible Generalized Least Squares (FGLS).} 
Here we would explicitly model the residual structure as AR1 and use feasible generalized least squares by first estimating the model using simple OLS, and then estimating the implied residual matrix with the empirical residuals.
While entirely viable, we wanted to take full advantage of the \verb|sim| function from the \verb|arm| package; we therefore instead use OLS in order to leverage existing packages to obtain the pseudo-posteriors.
That being said, FGLS coupled with classic maximum likelihood estimation would provide asymptotic confidence intervals based on the normal approximation.

One could also employ \emph{iterative model fitting.}
In particular, if we had the $\epsilon_{t-1}$ we could use them as covariates in our regression instead of the $Y_{t-1}$.
This motivates first estimating them using a model without lagged covariates, and then using those estimates in a second run of our model.
First fit $Y_t = f(X_t) + \epsilon_t$ using OLS.
Then calculate $\hat{D}_t = Y_t - \hat{Y}_t$ with $\hat{Y}_t = \hat{f}(X_t)$.
Then refit a model of 
\[ Y_t^{(2)} = f(X_t) + \rho \hat{D}_{t-1} + \omega_t \]
and calculate new differences
\[ \epsilon_t^{(2)} = Y_t - \hat{Y}_t^{(2)} = Y_t - \hat{f}^{(2)}(X_t) . \]
Repeat until convergence.
Under this approach to get a model fit, it is unclear how to simulate to get prediction uncertainty, or how to otherwise codify uncertainty of our model parameters correctly.

Finally, one could simply specify the model and fit it using a Bayesian model-fitting package such as \verb|stan| \citep{carpenter2017stan}.
This would be functionally equivalent to the above, although there would no longer be need to incorporate lagged covariates or outcomes as the model could directly work with the latent residuals.
One would also have to explicitly choose priors, and this approach may feel less accessible to many practitioners than our approach rooted in classic linear regression.

\section*{Appendix C: Adjusting for time-varying covariates}

The monthly outcomes in our examples are summaries of individual data, and if the composition of the individual data are changing we might see changes in the aggregate summaries due to factors other than the policy reform being investigated.
For example, in our context, the charges associated with an arrest fall in different gross categories, and these categories tend to be treated differently due to having generally different levels of severity.
For example, in Mecklenberg, charges fall into traffic, misdemeanor, and felony and, as shown on Figure~\ref{fig:meck_changing_count}, the overall rates of misdemeanors is falling from before the policy shift to well after.
This means the mix of cases (as captured by being misdemeanor, felony, or traffic related) at each month is changing, and so post-policy we have proportionally more felony cases, which are typically more severe and would likely involve greater rates of detention.
Thus our outcome of interest, percent bail, may be being impacted by this changing case mix.
We therefore want a sensitivity check to test whether our overall findings are being partially or fully driven by this change in case composition, rather than change in how cases are being treated by the judicial system.
In other words, we want to adjust for case composition in our estimation process.

Post-stratification is one approach for handling this type of problem when one has access to the individual data composing the overall outcome.
With post-stratification, instead of calculating the simple average outcome for a month, we reweight the individual cases so that the proportion of each case type matches some canonical distribution.
For example, in Mecklenberg, 33\% of the charges are felony, 54\% misdemeanor, and 13\% traffic across all our post-policy months.
Therefore, for each month, we calculate an adjusted rate of bail by first reweighting the cases in that month to match these percentages before averaging to obtain the outcome. (This is the same as first calculating the proportion of cases given bail for each of the three categories, and then averaging these proportions with the weights of 33\%, 54\% and 13\%.) 
After doing this for each month, any differences in our adjusted outcomes across months can be ascribed to how the cases are being treated differently rather than being due to different types of cases, as captured by the covariates.

This attribution does rely on the assumption, however, that the measured covariates being adjusted for capture all the case differences that are both changing across time and are associated with the outcome.
For example, if the proportion of misdemeanors is going down, but the (unobserved) severity of the remaining misdemeanors is going up, then a change in outcome may still be connected to changing case mix as captured by this unobserved severity.
That being said, using post-stratification as an adjustment technique can help explore whether it is plausible that observed potential impacts are simply due to change in case mix, and provide a good way of conducting a sensitivity check on one's results.
This approach has ties to matching or propensity score reweighting.
In particular, we refer to template matching \citep{Silber:2014ge}, where the mixture of patients within each of a collection of hospitals are reweighted to match a reference distribution to ease comparability across the hospitals.
Here, we would match across time.

\subsection*{Reweighting for post-stratification}
The first step for reweighting is to identify the target case mix that we wish to reweight each month to match.
We recommend doing this via identifying a time period of interest (e.g., the year following the policy change) and calculating the proportion of cases of each type in that period, storing them as a vector of proportions $\pi^*$.
For Mecklenberg, to illustrate, this is 
\[ \pi^* = (\pi^{F*}, \pi^{M*}, \pi^{T*}) = ( 33\%, 54\%, 13\% ) .\]

The next step depends on the type of outcome being considered.
There are two core types of outcome: total counts and mean outcomes.  
The former is, for example, total number of cases where the defendant failed to appear (FTA) at a scheduled court date.
The second is, for example, the average days spent in jail for a case, or the proportion of cases that were given bail.
(The proportion of cases assigned bail can be thought of as the average of a 0/1 indicator variable for getting assigned bail.)

To adjust the outcomes based on the proportions of cases in each subgroup, we first calculate the aggregated outcome for each subgroup for each month $t$.  
Call these $Y_t^s$, with $s$ being in our case $F$, $M$, or $T$.
This could be total count, average outcome in the subgroup, or the proportion of the subgroup with a given outcome.

Next we re-weight the observed overall outcome (there is no modeling here) depending on type of outcome.
For means and proportions, we calculate
\[ Y^*_t = \sum_{s} \pi^{s*} Y_t^s .\]
For count, we calculate, if we let $N_t$ be total number of cases at month $t$ and $N_t^s$ total number of cases in subgroup $s$ at month $t$,
\[ Y^*_t = N_t \sum_{s} \pi^{s*} \frac{Y_t^s}{N^s_t} \]
This formula comes from first changing our outcome to the mean outcome in the group, getting the estimated mean for the whole month using the group weights, and then scaling back to raw count.

We use the above formula to adjust all months, both pre- and post-policy.  
This gives an adjusted time series where we have controlled for the strata considered.
This series could diverge from the raw series, if the proportions are changing and if the average outcomes $Y_t^s$ differ across $s$.

We now, at this point, simply fit our normal ITS model on the adjusted series.
This is effectively weighted regression, where we have re-weighted units at the individual level.
(It is a bit odd in appearance in that we do not model the individual level units but instead aggregate.)

\paragraph{Remark.}
Other adjustment approaches are also possible.
One alternate method of post-stratification would be to divide the cases into subgroups and fit several distinct ITS models, one for each subgroup.
Unfortunately, especially given our approach to handling seasonality and the autoregressive structure, we would want to account for possible dependence between the subgroups (e.g., they all may have higher or lower outcomes in a given month in some correlated fashion).
How to do this when each subgroup is fit separately is unclear.
This is why we instead reweigh each month by the subgroup proportions and then fit the ITS model to the resulting combined series.

One could alternatively implement a version of the above scheme with a single modeling step that includes covariates, such as the proportion of cases in the three categories, and trend by covariate interactions.
In particular, one would add interaction terms between the intercept and the time covariates and the proportion in each group to the model.  
The estimated coefficients would then be averaged together after the model fitting step.
This approach is most similar to ``controlling'' for variables in a regression.
These interaction terms effectively allow each subgroup to have its own structural model.
Unlike the simpler approach above, this approach could even be extended to allow for completely different time trends for the different subgroups and could be used to improve the face validity of the model if there were substantive reasons to believe such variation were present.

The simple version we presented above adjusts the raw data by aggregating the data with weights, and then conducts the analysis on a single aggregated dataset.
More complex versions are to in effect fit individual models to the subgroups, and then aggregate the model results.
This could in principle be more powerful as we are using the proportion of cases in a given month as a covariate to predict variation, which could lead to smaller standard errors.
Unless these proportions are highly predictive of outcome, the gains from this will likely be minimal.

All of these alternative approaches could potentially have benefits, especially if different subgroups were believed to have substantially different trends.
We leave exploring these directions more fully to future work.

\begin{figure}
\centering
\begin{subfigure}[t]{.48\textwidth}
  \includegraphics[width=\textwidth]{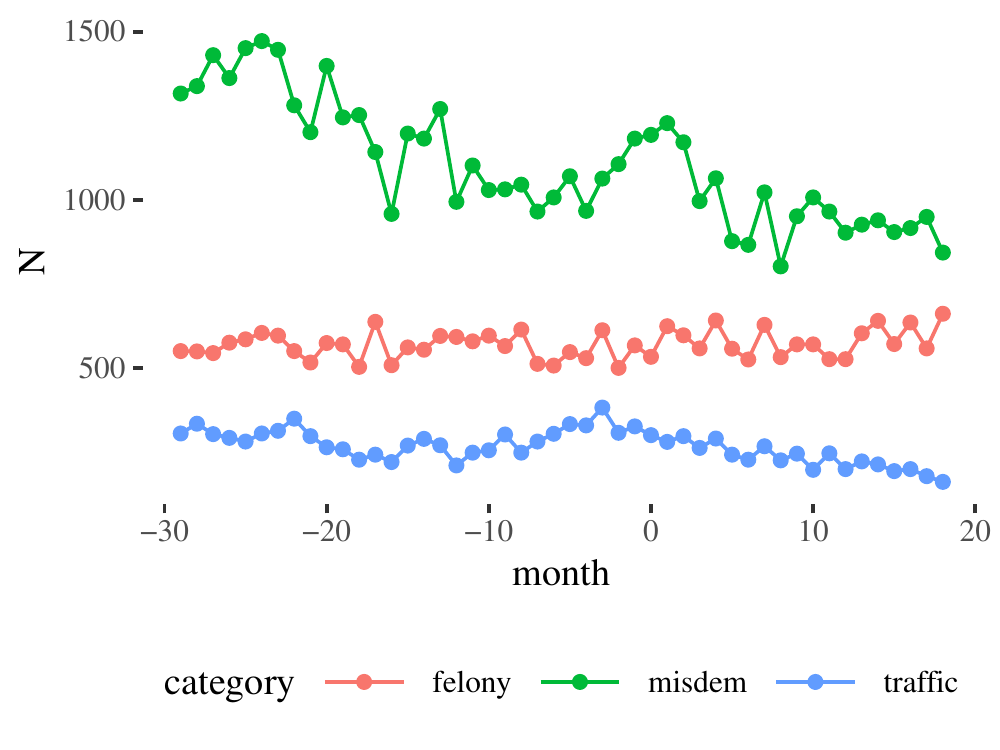}
  \caption{Number of cases of each charge type.}
  \label{fig:meck_changing_count}
\end{subfigure}
\begin{subfigure}[t]{.48\textwidth}
  \includegraphics[width=\textwidth]{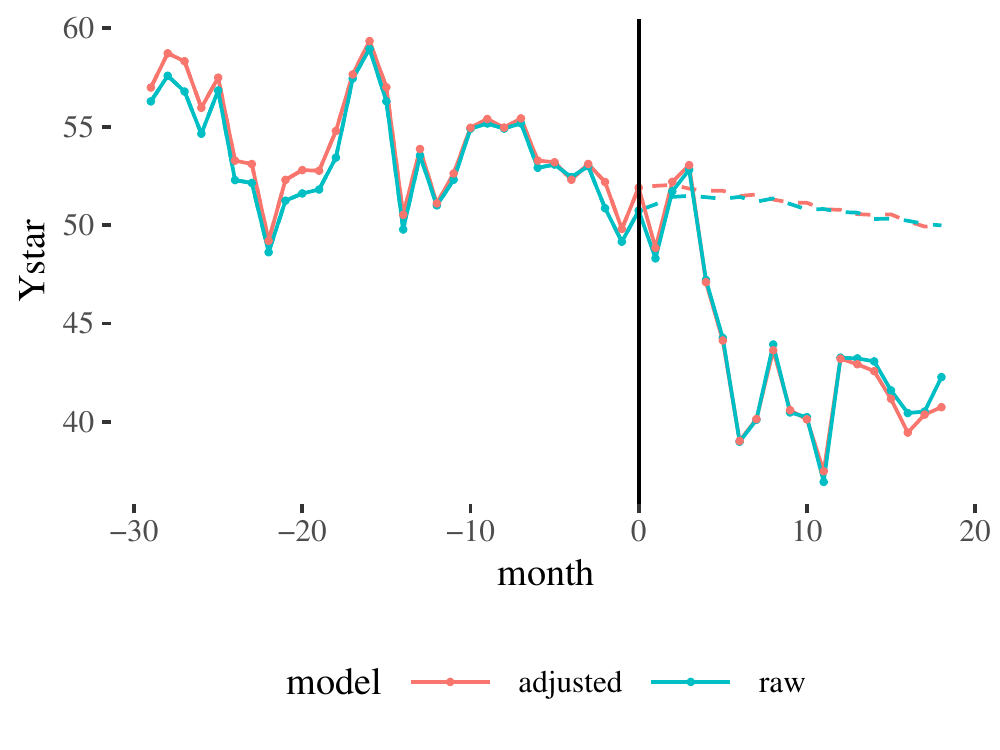}
  \caption{Adjusted impact on bail setting}
  \label{fig:meck_adjusted}
\end{subfigure}
\caption{Mecklenberg results adjusted by charge type.}{At left we see the number of misdemeanors falling at a non-linear rate; this causes the mix of cases to be changing in a complex way over time. At right our adjust rate of bail setting and associated impact analysis when we reweight each month to have a canonical mix of cases.}
\label{fig:meck_poststrat}
\end{figure}


\subsection*{Case study: A sensitivity check for Mecklenberg}

We can use post-stratification to investigate whether our original results are sensitive to the changing case mix of cases in Mecklenberg over time.
To do this we first calculate what proportion of the charges were in each class post-policy (we select post-policy to capture current trends rather than historical, but this is a design decision up to the analyst).
As stated above, we find 33\% of the charges are felony, 54\% misdemeanor, and 13\% traffic.
We then reweight each month pre- and post-policy, getting an adjusted rate of bail setting as the weighted average of the rates within each of our three groups.
Finally, we use our standard lagged outcome analysis on these reweighted totals to get our predicted trend line.
The results are on Figure~\ref{fig:meck_adjusted}; note how both our counterfactual extrapolation and the observed trend are impacted by the reweighting.
That being said, we see little change in the overall impacts, and therefore our substantive results remain the same.
There is no evidence that the results are substantially driven by the change in case mix, as measured by category of case.

\section*{Appendix D: R Package overview}
We provide an R package, \verb|simITS|, to implement the methods described in this paper.
The documentation with the package is more comprehensive than this appendix, and includes a vignette walking through parts of the Mecklenberg and New Jersey analyses, but we provide a brief overview of the New Jersey analysis here as well.

To fit a seasonality model, first specify the functional form of the model:
\begin{verbatim}
my.model =  make.fit.season.model( ~ temperature + Q2 + Q3 + Q4 )	
\end{verbatim}
This creates a model that can then be fit to data; the named variables are all assumed to be in the dataset we will eventually analyze.
The following code fits and displays our model to the pre-policy data only (\verb|Y| is the outcome, stored as another column in our data), and does not include lagged covariates:
\begin{verbatim}
mod = my.model( dat = filter( newjersey, month <= 0 ), "Y", 
                 lagless = TRUE )
summary( mod )
\end{verbatim}
In the above, \verb|mod| is the result of simple linear regression, and the \verb|summary()| call will print out the estimated coefficients and overall $R^2$.

To conduct the simulation, first add the lagged covariates needed (the package will automatically extract the needed covariates given a specified model) and then make the call to process the outcome of interest:
\begin{verbatim}
newjersey = add.lagged.covariates( newjersey, 
                                   outcomename = "Y", 
                                   covariates = my.model )	
envelope = process.outcome.model( "Y", newjersey, t0 = 0, 
                                  R = 1000,
                                  summarize = TRUE, 
                                  smooth = FALSE,
                                  fit.model = my.model )
\end{verbatim}
If there are multiple outcomes, simply change the outcome name in the call above.
Loess smoothing can be done by changing the flag for \verb|smooth| to \verb|TRUE|.

We can make our plot as follows (this method uses the \verb|ggplot| plotting environment):
\begin{verbatim}                                  
plt <- make.envelope.graph( envelope, t0 = 0 )
plt
\end{verbatim}
See the package documentation for further specifications and details of the resulting objects returned from these primary method calls.

\end{document}